\newtheorem{definition}{Definition}[section]
\newtheorem{theorem}[definition]{Theorem}
\newtheorem{proposition}[definition]{Proposition}
\newcommand{\calA}{\mathcal{A}}
\newcommand{\bbC}{\mathbb{C}}
\newcommand{\calD}{\mathcal{D}}
\newcommand{\bbD}{\mathbb{D}}
\newcommand{\calE}{\mathcal{E}}
\newcommand{\bbE}{\mathbb{E}}
\newcommand{\bbI}{\mathbb{I}}
\newcommand{\calM}{\mathcal{M}}
\newcommand{\bbN}{\mathbb{N}}
\newcommand{\bbP}{\mathbb{P}}
\newcommand{\calQ}{\mathcal{Q}}
\newcommand{\bbR}{\mathbb{R}}
\newcommand{\calS}{\mathcal{S}}
\newcommand{\calU}{\mathcal{U}}
\newcommand{\calV}{\mathcal{V}}
\newcommand{\calX}{\mathcal{X}}
\newcommand{\calY}{\mathcal{Y}}
\newcommand{\argmax}{\mathop{\text{argmax}}}
\newcommand{\eqdef}{\triangleq}
\newcommand{\ket}[1]{{\left\vert{#1}\right\rangle}}
\renewcommand{\leq}{\leqslant} 
\renewcommand{\geq}{\geqslant} 
\begin{document}
\title{Active Hypothesis Testing for Quantum Detection of Phase-Shift Keying Coherent States} 

\author{%
  \IEEEauthorblockN{Yun-Feng Lo and Matthieu R. Bloch\\}
  \IEEEauthorblockA{School of Electrical and Computer Engineering,
                    Georgia Institute of Technology\\
                    Atlanta, Georgia, 30332-0250, United States
                    \\Email: ylo49@gatech.edu,~matthieu.bloch@ece.gatech.edu}
}

\maketitle

\begin{abstract}
   This paper explores the quantum detection of Phase-Shift Keying (PSK)-coded coherent states through the lens of active hypothesis testing, focusing on a Dolinar-like receiver with constraints on displacement amplitude and energy. With coherent state slicing, we formulate the problem as a controlled sensing task in which observation kernels have parameters shrinking with sample size. The constrained open-loop error exponent and a corresponding upper bound on the Bayesian error probability are proven. Surprisingly, the exponent-optimal open-loop policy for binary PSK with high dark counts is not simply time-sharing. This work serves as a first step towards obtaining analytical insights through the active hypothesis testing framework for designing resource-constrained quantum communication receivers.
\end{abstract}

\section{Introduction}
\label{sec:intro}

The problem of detecting and distinguishing quantum states is of fundamental importance in quantum information science. Quantum communication and sensing are ultimately limited by how well possibly nonorthogonal quantum states can be distinguished. The minimum value of the probability of error for distinguishing quantum states is colloquially known as the Helstrom limit~\cite{helstrom_detection_1967,helstrom_detection_1968,helstrom_quantum_1976} and the associated Positive Operater-Valued Measure (POVM) can be identified by solving a linear program~\cite{yuen_communication_1971,holevo_statistical_1972,yuen_optimum_1975}. However, experimentally implementing the optimal POVM remains a challenge, in general.

The special case of quantum optical states has attracted particular attention, and several receivers based on experimentally feasible optical components have been proposed. Direct detection receivers that measure the intensity of signals by counting photons are well studied and have been used, for instance, to characterize the Poisson channel capacity with coherent state inputs~\cite{shapiro_optical_1979,shapiro_ultimate_2005,martinez_spectral_2007}. More sophisticated receivers are required to distinguish coherent states with distinct phases. For binary phase shift keying (BPSK)-coded coherent states, Kennedy~\cite{kennedy_near-optimum_1973,shapiro_near-optimum_1980} proposed an architecture that displaces the incoming state before direct detection, by making the incoming state interfere with a strong local oscillator on a beam splitter. The optimal value of the displacement in Kennedy's receiver can be exactly characterized~\cite{takeoka_discrimination_2008,wittmann_near-optimal_2008,takeoka_near-optimal_2009}. Perhaps surprisingly, Dolinar~\cite{dolinar_optimum_1973, dolinar_class_1976} showed that adaptively controlling the displacement of the Kennedy receiver during detection significantly improves performance and even achieves the Helstrom limit. The Dolinar receiver can be interpreted as a decision-making problem in which, at each step, the decision maker attempts to maximize a utility in the form of an average mutual information~\cite{erkmen_dolinar_2011,chung_capacity_2017}. Unfortunately, generalizing the Dolinar receiver and achieving the Helstrom limit beyond the BPSK case has proved challenging. Not much is known about optimal architectures for distinguishing as few as three states~\cite{bondurant_near-quantum_1993,becerra_state_2011,becerra_m-ary-state_2011,becerra_photon_2015}.

The architecture of the Dolinar receiver exhibits striking similarities with active hypothesis testing~\cite{naghshvar_active_2010,naghshvar_sequentiality_2013,naghshvar_active_2013} and controlled sensing~\cite{nitinawarat_controlled_2013,nitinawarat_controlled_2013-1,nitinawarat_controlled_2013-2,nitinawarat_controlled_2015} problems in the statistics literature. In fact, one can view the Dolinar receiver as an instance of a POVM with classical parameters that can adapt~\cite{dimario_demonstration_2022,rodriguez-garcia_determination_2022,rodriguez-garcia_adaptive_2024}. Several works have already extended the classical results of active hypothesis testing to the quantum setting~\cite{li_optimal_2021,martinez_vargas_quantum_2021,fields_sequential_2024}. 
Nevertheless, these ideas have thus far not offered many insights into new experimental architectures that would help realize efficient quantum detection in a laboratory. Moreover, the unavoidable experimental imperfections occurring in the real world break any optimality claim of the Dolinar receiver, and little is known about the performance of non-ideal detectors~\cite{yuan_kennedy_2020,yuan_optimally_2021}.

Motivated by a recent experimental result~\cite{Cui2022Quantum} on quantum receiver enhanced by adaptive learning, effectively combining the Dolinar receiver with reinforcement learning, we investigate the problem of quantum detection of PSK-coded coherent states through the lens of active hypothesis testing. Specifically, we consider a situation in which a Dolinar-like receiver attempts to discriminate quantum states using sequential decision-making policy subject to constraints on the peak and average squared displacement that can be applied. Such constraints are motivated by applications to deep space communications using resource-constrained satellites. We also consider the case in which the noises and imperfections are collectively modeled as dark counts. This, arguably extremely simplified situation, serves as an exemplar to leverage the breadth of knowledge in sequential decision-making to develop analytical insight into the quantum detection problem relevant for experimental systems. The main contributions of the paper are 1) Formulation of the single-shot PSK-coded coherent state discrimination problem as an active hypothesis testing / controlled sensing problem; 2) Proof that even in the unconventional setting of observation kernels having parameters shrinking with sample size, a properly defined constrained open-loop exponent takes a form similar to that in classical active hypothesis testing / controlled sensing problems; 3) Proof that when dark counts is high in the BPSK case, the exponent-optimal constrained open-loop control policy is not time-sharing between the Kennedy displacement and zero.
Our numerical results suggest the potential of the approach to develop competitive resource-constrained receivers.

The remainder of the paper is organized as follows. We introduce notations in Section~\ref{sec:notations} and formally introduce the system model in Section~\ref{sec:problem}. We present our main results in Section~\ref{sec:results}, as well as numerical results illustrating the benefits of an active hypothesis testing approach to quantum detection. We relegate proof details to the Appendix~\ref{appen-proof-thm} and~\ref{appen-proof-prop} to streamline the presentation. 

\section{Notations}
\label{sec:notations}
Throughout this paper, $\log(\cdot)$ denotes natural logarithm. $\mathbb{I}\{\cdot\}$ denotes the indicator function. $\text{Poi}(\lambda)$ denotes the Poisson distribution with rate $\lambda$.
Let $\bbN\eqdef\{1,\cdots\}$ and $\bbN_0\eqdef\{0,1,\cdots\}.$ Let $\bbR$ be the set of real numbers, $\bbR_{>0}\eqdef\{x\in\bbR:x>0\}$ and $\bbR_{\geq 0}\eqdef\{x\in\bbR:x\geq 0\}$. Let $\bbC$ be the set of complex numbers, and $i$ be the imaginary unit.
For $n\in\bbN$, we denote $[n]\eqdef\{1,\cdots,n\}$.
For $n\in\bbN$, a sequence of random variables $(X_1,\cdots,X_n)$ is denoted $X^n$; the corresponding sequence of realizations $(x_1,\cdots,x_n)$ is denoted $x^n$. Also, $X^0 \eqdef \emptyset \eqdef x^0$, where $\emptyset$ denotes the empty symbol.  
For any set $\calX$, let $|\calX|$ denote its cardinality. For any $z\in\bbC$, let $|z|$ denote its modulus. For any $R>0$, let $\calD(R)\eqdef\{z\in\bbC:|z|\leq R\}$. For any $z\in\bbC$ and set $\calX\subset\bbC$, let $z\calX\triangleq\{zx:x\in\calX\}$.
For any set $\calX$ and $N\in\bbN$, let $\calX^N$ denote the $N$-fold Cartesian product $\calX$. For any set $\calX$, let $\mathscr{F}_{\calX}$ denote a sigma algebra on $\calX$, and let $2^\calX$ denote the power set of $\calX$ in case $|\calX|<\infty$. For a set $\calX \subset \bbC,$ let $\mathscr{B}_{\calX}$ denote the Borel sigma-algebra on $\calX$. Let $\mu_{\mathbb{N}_0}$ denotes the counting measure on $\mathbb{N}_0$.

The Chernoff $s$-divergence between densities $p$ and $q$ with respect to (w.r.t.) a common dominating measure $\mu$ on the space $\Omega$ is defined as
    $
        \mathbb{C}_s(p \| q; \mu)
        \triangleq
        -\log \int_\Omega p^s q^{1-s} \mathrm{d}\mu
    $
for $s\in[0,1]$. Let $\mathbb{D}(p \| q; \mu)$ be the relative entropy (or Kullback-Liebler divergence) between densities $p$ and $q$ w.r.t. $\mu$, i.e.,
$
    \mathbb{D}(p \| q; \mu)
    \triangleq
    \int_\Omega q \log \frac{p}{q} \mathrm{d}\mu
    .
$

\section{Problem setting}
\label{sec:problem}

We consider the model illustrated in Fig.~\ref{fig:model}, in which a transmitter sends a coherent state signal $S$ from the set $\calS=\{ \ket{\alpha \textnormal{e}^{i\phi_m}} \}_{m\in\calM}$ for some known $\alpha>0$ and $\phi_m\in[0,2\pi)$ for each $m\in\calM\eqdef\{0,1,\ldots,|\calM|-1\}.$ For simplicity, we model imperfections and noise collectively as a dark count rate $\lambda_\text{d}$, while losses can be factored into the definition of states in $\calS$~\cite{chung_capacity_2017}. The receiver is modeled as a resource-constrained Kennedy-Dolinar architecture that can cause the incoming state to interfere with a local reference signal $
$ subject to amplitude and average energy constraints, before the mixed signal is fed into a photon-number-resolving detector (PNRD). Specifically, a highly transmissive (with transmissivity $\gamma\approx 1$) beam splitter is used to create a displacement operation on the input signal $S$ and produce a mixed signal $S+u$, where the amount of displacement $u$ is controlled by the local reference signal $\ell$ via the relationship $\ell=u/\sqrt{1-\gamma}$. The objective of the receiver is to design a detection policy in the form of a sequence of displacements together with a decision rule to identify the received state. Concretely, we restrict this paper to the scenario of Bayesian minimum error discrimination (as opposed to unambiguous state discrimination~\cite{becerra_implementation_2013,zhuang_ultimate_2020,sidhu_linear_2023}), i.e., the goal of the detection policy is to minimize the Bayesian error probability of detecting the given coherent state. For simplicity, we assume a uniform prior. Moreover, we consider a ``one-shot discrimination" setting in which only a single copy of the coherent states is transmitted. Because of imperfections and noise, the Dolinar receiver \cite{dolinar_optimum_1973} is no longer known to be optimal; the optimized displacement receiver (ODR) first proposed by Takeoka \& Sasaki \cite{takeoka_discrimination_2008,wittmann_near-optimal_2008,takeoka_near-optimal_2009} incorporates dark count, but generally requires a displacement larger than that of the near-optimal Kennedy receiver~\cite{kennedy_near-optimum_1973}. 

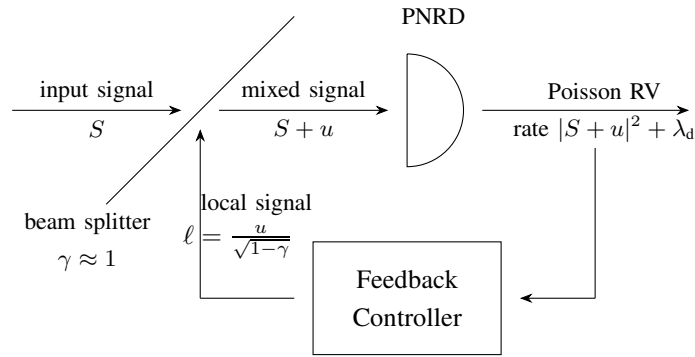
\begin{figure}[!ht]
  \centering
    \begin{circuitikz}
    \draw [->, >=Stealth] (0,8.75) -- (2.25,8.75);
    \node [font=\small] at (1.125,9) {input signal};
    \node [font=\small] at (1.125,8.5) {$S$};
    \draw [short] (3.75,10) -- (1.25,7.5);
    \node [font=\small] at (1,7.25) {beam splitter};
    \node [font=\small] at (1,6.75) {$\gamma\approx 1$};
    \draw [->, >=Stealth] (2.75,8.75) -- (5,8.75);
    \node [font=\small] at (3.875,9) {mixed signal};
    \node [font=\small] at (3.875,8.5) {$S+u$};
    \draw  (5.25,9.5) -- (5.25,8) arc[start angle=270, delta angle=180, radius=0.75cm] -- (5.25,9.5) ;
    \node [font=\small] at (5.6,10) {PNRD};
    \draw [->, >=Stealth] (6.25,8.75) -- (9,8.75);
    \node [font=\small] at (7.875,9) {Poisson RV};
    \node [font=\small] at (7.875,8.5) {rate $ |S+u|^2 + \lambda_\textnormal{d} $};
    \draw [short] (7.75,8.25) -- (7.75,6.25);
    \draw [->, >=Stealth] (7.75,6.25) -- (6.75,6.25);
    \draw  (4,7) rectangle (6.5,5.5);
    \node [font=\normalsize] at (5.25,6.5) {Feedback};
    \node [font=\normalsize] at (5.25,6) {Controller};
    \draw [short] (2.5,6.25) -- (3.75,6.25);
    \draw [->, >=Stealth] (2.5,6.25) -- (2.5,8.5);
-    \node [font=\small] at (3.25,7.5) {local signal};
    \node [font=\normalsize] at (3,7) {$\ell=\frac{u}{\sqrt{1-\gamma}}$
    };
    \end{circuitikz}
 
  \caption{Sequential detection of optical quantum states. Adapted from \cite{chung_capacity_2017}.}
  \label{fig:model}
\end{figure}

Slicing~\cite{assalini_revisiting_2011,da_silva_achieving_2013,nair_realizable_2014} is known to improve the error rate performance of coherent state discrimination, by translating the one-shot discrimination task into a multi-copy discrimination problem. For example, the quadrature phase-shift keying receiver (QPSK) of Bondurant \cite{bondurant_near-quantum_1993} slices coherent state signals w.r.t. time \cite{nair_realizable_2014}, while the sequential waveform nulling receiver of Nair et al. \cite{nair_realizable_2014} slices coherent state signals w.r.t. amplitude. In this work, we only consider the slicing of flat-top temporal pulses w.r.t. time, as illustrated in Fig.~\ref{fig:slicing-input}. In particular, we assume that the input signals from the set $\calS$ of coherent states $\{ \ket{\alpha \textnormal{e}^{i\phi_m}} \}_{m\in\calM}$ to be discriminated are temporal modes with absolute amplitude $A$ (with $A>0$) and duration $T>0$, where $\alpha=\sqrt{n_\text{s}}$, and $n_\text{s}=A^2 T$ is the mean photon number corresponding to these coherent state signals. Denote $N$ as the number of time slices (so $N\in\mathbb{N}$) and $\Delta$ as the duration of each slice so that $T=N\Delta$. Since, without loss of generality, we can set $T=1$ \cite{zoratti_agnostic_2021} (while interpreting the dark count rate $\lambda_\textnormal{d}$ as the probability of dark count happening in the duration $T$), in the following, we identify $\alpha=A$ and $n_\text{s}=A^2$. We also define the signal-to-noise ratio (SNR) $R_\text{SN}\triangleq \alpha^2/\lambda_\text{d}\in(0,\infty)$.

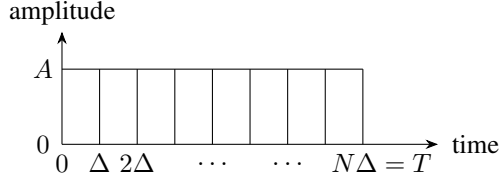
\begin{figure}[!ht]
  \centering
    \begin{tikzpicture}
        \draw [->, >=Stealth] (0,0) -- (5,0);
        \node [font=\normalsize] at (5.5,0) {time};
        \node [font=\normalsize] at (-0.25,0) {$0$};
        \draw [->, >=Stealth] (0,0) -- (0,1.5);
        \node [font=\normalsize] at (0,1.75) {amplitude};
        \node [font=\normalsize] at (0,-0.25) {$0$};
        \draw [-] (0,1) -- (4,1);
        \node [font=\normalsize] at (-0.25,1) {$A$};
        \draw [-] (0.5,0) -- (0.5,1);
        \node [font=\normalsize] at (0.5,-0.25) {$\Delta$};
        \draw [-] (1,0) -- (1,1);
        \node [font=\normalsize] at (1,-0.25) {$2\Delta$};
        \draw [-] (1.5,0) -- (1.5,1);
        \draw [-] (2,0) -- (2,1);
        \node [font=\normalsize] at (2,-0.25) {$\ldots$};
        \draw [-] (2.5,0) -- (2.5,1);
        \draw [-] (3,0) -- (3,1);
        \node [font=\normalsize] at (3,-0.25) {$\ldots$};
        \draw [-] (3.5,0) -- (3.5,1);
        \draw [-] (4,0) -- (4,1);
        \node [font=\normalsize] at (4.25,-0.25) {$N\Delta=T$};

    \end{tikzpicture}
    
  \caption{Temporal slicing of the input coherent state pulse with total duration $T$ and constant amplitude $A$ into $N$ slices, each of duration $\Delta = T / N$.}
  \label{fig:slicing-input}
\end{figure}

We assume that the local reference signal $\ell(t)=u(t)/\sqrt{1-\gamma}$ in Fig.~\ref{fig:model} stays constant for the duration of each slice. Hence, the same condition holds for the displacement $u(t)$, as depicted in Fig.~\ref{fig:slicing-control} for a general displacement signal. In Fig.~\ref{fig:slicing-control}, only the amplitude of $u(t)$ is plotted but, in general, the phase of $u(t)$ can also change between slices. First, we observe that since the transmissivity $\gamma$ is close to $1$, the local reference signal $\ell(t)$ is strong even if $|u(t)|$ is small. Moreover, it is known that in the BPSK case, when the priors are equal, the Dolinar receiver requires an infinite amount of displacement at the start (i.e., at $t=0$, $|u(t)|=\infty$~\cite{geremia_distinguishing_2004}); it can also be shown via a direct calculation that because of this singularity at $t=0$, the energy $\mathcal{E} \triangleq \int_0^T |u(t)|^2 \mathrm{d}t$ (up to some proportionality constant) required to generate the displacement signal $u(t)$ for the Dolinar receiver is infinite. Even when the priors are not exactly equal but close to being equal (which is a natural assumption when we do not know if one coherent state is more probable than the other), the Dolinar receiver requires several times the energy of the Kennedy receiver or the ODR. Motivated by these observations, we focus on exploring the other end of the control energy spectrum, i.e., when the controller can only use a control signal $u(t)$ that has an amplitude at most $\alpha=A$, and, simultaneously, has energy (or average squared amplitude) at most $n_\text{s}=\alpha^2$. 

\begin{figure}[!ht]
  \centering
    
    \begin{tikzpicture}
        \draw [->, >=Stealth] (0,0) -- (5,0);
        \node [font=\normalsize] at (5.5,0) {time};
        \draw [->, >=Stealth] (0,0) -- (0,2);
        \node [font=\normalsize] at (0,2.125) {amplitude};
        
        \node [font=\normalsize] at (-0.5,1.75) {$u(t)$};
        \draw [-,thick] (0,1.5) -- (0.5,1.5);
        \draw [-,thick] (0.5,1.5) -- (0.5,1);
        \draw [-,thick] (0.5,1) -- (1,1);
        \draw [-,thick] (1,1) -- (1,0.5);
        \draw [-,thick] (1,0.5) -- (1.5,0.5);
        \draw [-,thick] (1.5,0.5) -- (2,0.5);
        \draw [-,thick] (2,0.5) -- (2,1);
        \draw [-,thick] (2,1) -- (2.5,1);
        \draw [-,thick] (2.5,1) -- (2.5,1.5) ;
        \draw [-,thick] (2.5,1.5) -- (3,1.5);
        \draw [-,thick] (3,1.5) -- (3,1.5) ;
        \draw [-,thick] (3,1.5) -- (3.5,1.5);
        \draw [-,thick] (3.5,1.5) -- (3.5,0.5);
        \draw [-,thick] (3.5,0.5) -- (4,0.5);
        \draw [-,thick] (4,0.5) -- (4,0);
        \draw [-,dashed] (0.5,0) -- (0.5,1.75);
        \draw [-,dashed] (1.0,0) -- (1.0,1.75);
        \draw [-,dashed] (1.5,0) -- (1.5,1.75);
        \draw [-,dashed] (2.0,0) -- (2.0,1.75);
        \draw [-,dashed] (2.5,0) -- (2.5,1.75);
        \draw [-,dashed] (3.0,0) -- (3.0,1.75);
        \draw [-,dashed] (3.5,0) -- (3.5,1.75);
        \draw [-,dashed] (4.0,0) -- (4.0,1.75);
        \draw [-,dashed] (0,1.5) -- (4.5,1.5);
        \draw [-,dashed] (0,1) -- (4.5,1);
        \draw [-,dashed] (0,0.5) -- (4.5,0.5);
        \draw [-,dashed] (0,0) -- (4.5,0);
    \end{tikzpicture}
        
  \caption{A control signal of duration $T$ and generally non-constant amplitudes across $N$ intervals of duration $\Delta$.}
  \label{fig:slicing-control}
\end{figure}
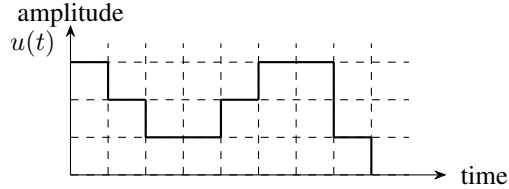

Our first contribution is the observation that, if formulated carefully, the problem of designing a detection policy for this resource-constrained Kennedy-Dolinar type receiver can be considered as an active hypothesis testing \cite{naghshvar_active_2010,naghshvar_sequentiality_2013,naghshvar_active_2013} or a controlled sensing \cite{nitinawarat_controlled_2013,nitinawarat_controlled_2013-1,nitinawarat_controlled_2013-2} problem. Indeed, the number of slices $N$ can be considered the sample size; the hypothesis set is $\mathcal{M}$ and $M\in\mathcal{M}$ is the true hypothesis considered as a random variable (r.v.); the observation alphabet $\mathcal{Y}=\mathbb{N}_0$ captures the output possibilities of the PNRD
; and, because of the limited precision of the feedback controller, the control action set $\mathcal{U}$ is chosen to be a suitably quantized version of the closed disk $\calD(\alpha R_\textnormal{CA})$, where $R_\text{CA}\in (0,\infty)$ is the control peak amplitude to coherent state amplitude ratio. Moreover, we assume that $0\in\mathcal{U}$. Note that $\calU\subset\calD(\alpha R_\textnormal{CA})$ already encodes an amplitude constraint on the feedback control signal $u(t)=\sum_{n=1}^N u_n \mathbb{I}\{ t \in [(n-1)\Delta,n\Delta)\}$ by restricting that the control actions $u_n\in\mathcal{U}$ for $n\in[N]$. Because of the independence of slices \cite{assalini_revisiting_2011,da_silva_achieving_2013,nair_realizable_2014} resulting from coherent state slicing, the photon counter observation $Y_n$ (a r.v.) during the time interval $[(n-1)\Delta,n\Delta)$ satisfies the ``stationary Markovity assumption" of \cite{nitinawarat_controlled_2013}, i.e., given the current control $U_n=u_n$ (where $U_n$ denotes the $n$th generally random control action, as control policies are generally stochastic, and $u_n$ can be considered as the realization of $U_n$), the current observation $Y_n$ is conditionally independent of past actions $U^{n-1}$ and past observations $Y^{n-1}$. Following the stationary Markovity assumption, we can define a set of observation kernels (which are probability distributions) $\{p_m^u\}_{m\in\mathcal{M}}^{u\in\mathcal{U}}$ that characterize the active hypothesis testing/controlled sensing problem: given the past observations $Y^{n-1}=y^{n-1}$, the past actions $U^{n-1}=u^{n-1}$, the current action $U_n=u_n$, and the hypothesis $M=m$, the current observation $Y_n$ follows the probability distribution $p_m^{u_n}$. A technical assumption commonly assumed in controlled sensing (see, e.g., \cite{nitinawarat_controlled_2013}) is that for each action $u\in\mathcal{U}$, the probability distributions $\{p^u_m\}_{m\in\mathcal{M}}$ are all absolutely continuous w.r.t. a dominating measure $\mu_u$ (which, in general, depends on the action $u$). Then, we can equivalently think of each $p^u_m$ as its Radon-Nikodym derivative (``density") w.r.t. $\mu_u$. For the coherent state discrimination problem, we can set $\mu_u=\mathbb{N}_0$ for all actions $u\in\mathcal{U}$. Then, following the Poissonian statistics of photon detection, for each $u\in\mathcal{U}$ and $m\in\calM$, $p_m^{u}(y)$ is the probability mass function (PMF) of a Poisson distribution with rate $\lambda^u_m(\alpha,\lambda_\text{d},\Delta) \eqdef \left[|\alpha\text{e}^{i\phi_m}+u|^2+\lambda_\text{d}\right]\Delta$. We also write $p_m^u=p_m^u[\alpha,\lambda_\text{d},\Delta]$ to highlight its dependence on parameters. It is worth noting that our problem setting does not exactly fit that of active hypothesis testing/controlled sensing in, e.g., \cite{naghshvar_active_2010,naghshvar_sequentiality_2013,naghshvar_active_2013,nitinawarat_controlled_2013,nitinawarat_controlled_2013-1,nitinawarat_controlled_2013-2}, because the observation kernels $\{p_m^u\}^{u\in\mathcal{U}}_{m\in\mathcal{M}}$ depend on the number of observations $N=1/\Delta$ due to coherent state slicing.

In the terminology of active hypothesis testing/controlled sensing, the detection policy we seek is in general a non-sequential/fixed-sample-size and adaptive/causal policy. The policy is non-sequential because after choosing a small enough but fixed $\Delta$, we obtain a fixed number of durations $N$ (and hence the same number of observation samples). The policy is adaptive because the control actions $U_n$ on the $n$th $\Delta$-duration, in general, depend on all the past observations $U^{n-1}$ (recall the ``feedback controller" structure in Fig.~\ref{fig:model}). We also denote such a fixed-sample-size causal policy $q$ as a collection $\{ q_n( u_n | y^{n-1} , u^{n-1}) \}_{n\in[N]} $ of ``control kernels," i.e., each $q_n( u_n | y^{n-1} , u^{n-1})$ is the probability mass function on $\mathcal{U}$ of the $n$th control action $U_n$ given the past observations $Y^{n-1}=y^{n-1}$ and the past controls $U^{n-1}=u^{n-1}$. As a subset of the causal control policies, open-loop control policies are those satisfying the condition
\begin{align*}
q_n( u_n | y^{n-1} , u^{n-1})=q_n( u_n | u^{n-1}), \forall n\in[N],
\end{align*}
namely, the current action can only depend on past actions, but not past observations.

To precisely define the average energy constraint as well as the Bayesian error probability of a detection policy, we need a few definitions. First, the joint observation-control measure $\mathbb{P}_m$ conditioned on the hypothesis $M=m$ is defined via the joint density 
\begin{align*}
    p_m(y^N,u^N) 
    \triangleq 
    \prod_{n\in[N]} 
    \left[
    q_n( u_n | y^{n-1} , u^{n-1}) p_m^{u_n}(y_n)
    \right]
\end{align*} 
in the sense that for every joint event $\mathcal{A}=(\mathcal{A}_1,\mathcal{A}_2) \in \mathscr{F}_{\mathcal{Y}^N} \times 2^{(\mathcal{U}^N)}$ we have
\begin{align}
    \begin{split}
    \label{eq:def-joint-obser-contr-meas}
    &\mathbb{P}_m
    \left\lbrace
    (Y^N,U^N) \in \mathcal{A}
    \right\rbrace
    \\\triangleq~&
    \sum_{ u^N \in \mathcal{A}_2 }
    \int_{ y^N \in \mathcal{A}_1 }
    p_m ( y^N , u^N )
    \prod_{n\in [N]} \mathrm{d}\mu_{u_n}( y_n )
    .
    \end{split}
\end{align}
The expectation $\mathbb{E}_m[\cdot]$ is defined accordingly. The average energy constraint thus reads
\begin{align}
    \label{eq:avg_energy_constr}
    \mathbb{E}_m
    \left[
    \frac{1}{N} \sum_{n\in[N]} E(u_n) 
    \right]
    \leq 
    \mathcal{E}
    ,
    \forall m\in\mathcal{M},
\end{align}
where $E(\cdot):\mathcal{U}\to\mathbb{R}_{\geq 0}$ is the energy function of an action, and $\mathcal{E}\geq 0$ is the allotted control energy. We further define the control energy to coherent state energy ratio $R_\text{CE}\in(0,\infty),$ namely, $R_\text{CE}\eqdef\calE/\alpha^2$. This energy constraint implicitly restricts the admissible control policies since each policy induces a corresponding expectation $\mathbb{E}_m[\cdot]$. For the coherent state discrimination task, we choose $E(u)\triangleq |u|^2$. We also respectively denote $\mathcal{Q}_\text{CC}(\mathcal{U}^N,\mathcal{E})$, $\mathcal{Q}_\text{OL}(\mathcal{U}^N,\mathcal{E})$ as the set of length-$N$ causal-control, open-loop policies on the action set $\mathcal{U}$ satisfying the average energy constraint \eqref{eq:avg_energy_constr}. As mentioned earlier, we will focus on the regime $\mathcal{E}\in [0,\alpha^2]$, i.e., $R_\textnormal{CE}\leq 1$.

A detection policy (also called a test) $\tau=(q,\delta)$ consists of a control policy $q$ and a decision rule $\delta:\mathcal{Y}^N\times\mathcal{U}^N\to\mathcal{M}$ that is measurable w.r.t. $\bbP_m$ for each $m\in\calM$. The Bayesian error probability of a detection policy $\tau$ over the observation kernels $\{p_m^u\}^{u\in\mathcal{U}}_{m\in\mathcal{M}}$ is defined as
\begin{align*}
    P_\text{e}
    \left(
    \{p_m^u\}^{u\in\mathcal{U}}_{m\in\mathcal{M}}
    ,\tau
    \right)
    \triangleq
    \sum_{m\in\mathcal{M}}
    \pi_m 
    \mathbb{P}_m
    \left[ 
    \delta\left(Y^N,U^N\right) \neq m 
    \right]
\end{align*}
where $\pi_m$ is the prior probability of hypothesis $m\in\mathcal{M}$. We assume a uniform prior, i.e., $\pi_m=1/|\calM|$ for all $m\in\calM$. The set of optimal detection policies minimizing the Bayesian error probability is
$\arg\min_{\tau}  P_\text{e}\left(\{p_m^u\}^{u\in\mathcal{U}}_{m\in\mathcal{M}},\tau\right),$
which is generally difficult to characterize in the non-sequential adaptive (i.e., fixed-sample-size causal) setting \cite{naghshvar_active_2013, nitinawarat_controlled_2013}. Even in the case of binary coherent state discrimination, in which the Bayesian error probability takes the simpler form 
\begin{align*}
    &
    P_\text{e}
    \left(
    \{p_m^u\}^{u\in\mathcal{U}}_{m\in\{0,1\}}
    ,\tau
    \right)
    \\=~&
    \frac{1}{2}
    \mathbb{P}_0
    \left[ 
    \delta\left(Y^N,U^N\right) =1 
    \right]
    +
    \frac{1}{2}
    \mathbb{P}_1
    \left[ 
    \delta\left(Y^N,U^N\right) =0 
    \right]
    ,
\end{align*}
the set of optimal detection policies remains elusive except in ideal settings such as the noiseless case without amplitude and average energy constraint, where one of the known optimal detection policies that achieves the Helstrom bound is the Dolinar receiver \cite{dolinar_optimum_1973}.

A coarser concept for optimality in detection policies is that of exponent optimality (for the definition of such error exponents, see \cite{naghshvar_active_2013, nitinawarat_controlled_2013}). However, in the non-sequential adaptive (i.e., fixed-sample-size causal) setting, even the characterization of the error exponent remains an open problem; only upper and lower bounds are known \cite{naghshvar_active_2013, nitinawarat_controlled_2013}, although their tightness is unclear. In light of this difficulty, we focus on characterizing detection policies with \emph{open-loop} control that not only are exponent-optimal but also provide a corresponding exponentially-decaying upper bound on the minimum Bayesian error probability. Because the observation kernels $\{p_m^u\}^{u\in\mathcal{U}}_{m\in\mathcal{M}}$ depend on the number of observations $N$, we should be careful in defining the error exponent, not w.r.t. the number of slices $N$ but w.r.t. the ``energy" or mean photon number $n_\text{s}$ of the input coherent states, in the same spirit as the \emph{error probability exponent (EPE)} definition of coherent state receivers in \cite{nair_realizable_2014}. 
In particular, to carefully define the optimal amplitude and average energy-constrained open-loop error exponent w.r.t. $n_\text{s}=\alpha^2$, we fix the operating parameters $R_\text{SN}$, $R_\text{CA}$ and $R_\text{CE}$. We also specify a $K\in\bbN$ that determines the fineness of control set discretization. Then, for each $\alpha>0$, we associate the allotted average energy $\calE_\alpha\triangleq\alpha^2 R_\text{CE}$ and the set of controls $\calU_{\alpha,K} \triangleq \alpha\calV_K$ where $\calV_K \eqdef R_\text{CA}\{ \rho \textnormal{e}^{i\theta} ~:~ \rho \in \{0,1/K,\ldots,1\}, \theta \in \{0,2\pi/K,\ldots,2\pi(1-1/K)\}\}$. Then the optimal amplitude and average energy constrained open-loop error exponent w.r.t. $n_\text{s}=\alpha^2$ is formally defined as 

\begin{align*}
    &\beta_\textnormal{OL}
    \triangleq
    \beta_\textnormal{OL}
    ( R_\textnormal{SN} , R_\textnormal{CA} , R_\textnormal{CE} )
    \\
    \triangleq~&
    -
    \liminf_{ K , N \to \infty }
    \liminf_{ \alpha \to \infty }
    \inf_{ q \in \mathcal{Q}_\text{OL}( \calU_{\alpha,K}^N, \calE_\alpha ) }
    \inf_{ \delta : \mathcal{Y}^N \times \calU_{\alpha,K}^N \to \mathcal{M} }
    \\&
    \frac{1}{ \alpha^2 }
    \log
    P_\text{e}
    \left(
    \left\lbrace p_m^u \left[ \alpha , \frac{\alpha^2}{R_\textnormal{SN}} , \frac{1}{N} \right]
    \right\rbrace^{u\in\calU_{\alpha,K}}_{m\in\mathcal{M}}
    , ( q , \delta )
    \right)
    .
\end{align*}

\section{Main results}
\label{sec:results}

    \subsection{Theoretical results}
    \label{subsec:theory}
    
    \begin{theorem} 
    \label{thm:char}
    Given $R_\textnormal{SN}$, $R_\textnormal{CA}$ and $R_\textnormal{CE}$, the optimal amplitude and average energy constrained open-loop error exponent w.r.t. $n_\textnormal{s}=\alpha^2$ can be characterized as
     \begin{align*}
        &
        \beta_\textnormal{OL}(R_\textnormal{SN}, R_\textnormal{CA}, R_\textnormal{CE})
        \\
        =~&
        \sup_{
        Q \in \mathcal{Q}(R_\textnormal{CA},R_\textnormal{CE})
        }
        \min_{\substack{(\ell,m)\in\calM^2\\\ell<m}}
        \max_{s\in[0,1]}
        \bbE_{V\sim Q}
        \left[
        \mathbb{C}_s
        \left( P_\ell^V \| P_m^V ; \mu_{\bbN_0} 
        \right)
        \right]
        ,
    \end{align*}   
    where for each $v\in\bbC$ and $m\in\calM$, we define the distribution
    $P_m^v \triangleq \textnormal{Poi}(\Lambda_m^v)$ with the rate  
    $\Lambda_m^v \eqdef \Lambda_m(v) \eqdef | v + \textnormal{e}^{i\phi_m} |^2 + r_\textnormal{SN}$, where $r_\textnormal{SN}\eqdef 1/R_\textnormal{SN}$.  The set $\calQ(R_\textnormal{CA},R_\textnormal{CE})$ contains all probability distributions supported on $\calD(R_\textnormal{CA})$ with a second-moment constraint $R_\textnormal{CE}$, i.e.,
        \begin{align*}
            &\calQ(R_\textnormal{CA},R_\textnormal{CE})
            \triangleq
            \bigg\{
            Q: \mathscr{B}_{ \calD(R_\textnormal{CA}) } \to [0,1]
            ~\bigg\vert~
            \\ & \qquad \qquad \qquad \qquad
            Q( \calD(R_\textnormal{CA}) )
            =
            1
            ,
            \bbE_{V\sim Q}
            \left[
            |V|^2
            \right]
            \leq 
            R_\textnormal{CE}
            \bigg\}
            .
        \end{align*}
    Moreover, the following upper bound holds for the minimal amplitude and average energy constrained open-loop Bayesian error probability
        \begin{align*}
            &
            \liminf_{ K,N \to \infty }
            \inf_{ q \in \mathcal{Q}_\text{OL}( \calU_{\alpha,K}^N, \calE_\alpha ) }
            \inf_{ \delta : \mathcal{Y}^N \times \calU_{\alpha,K}^N \to \mathcal{M} }
            \\
            &P_\textnormal{e}
            \left(
            \left\lbrace p_m^u \left[ \alpha , \frac{\alpha^2}{R_\textnormal{SN}} , \frac{1}{N} \right]
            \right\rbrace^{u\in\calU_{\alpha,K}}_{m\in\mathcal{M}}
            , ( q , \delta )
            \right)
            \\
            &\leq 
            (|\calM|-1)
            \exp( 
            - \alpha^2
            \beta_\textnormal{OL}(R_\textnormal{SN}, R_\textnormal{CA}, R_\textnormal{CE})
            )
            .
        \end{align*}
     \end{theorem}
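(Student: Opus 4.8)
The plan is to prove the two displays together, via a matching achievability bound (which simultaneously yields $\beta_\textnormal{OL}\geq$ RHS and the stated bound on $P_\textnormal{e}$) and a converse ($\beta_\textnormal{OL}\leq$ RHS). The common starting point is a reduction exploiting the open-loop structure: for every $q\in\calQ_\textnormal{OL}(\calU_{\alpha,K}^N,\calE_\alpha)$ the law of $U^N$ does not depend on the hypothesis, so one may condition on $U^N=u^N$, and given $U^N=u^N$ and $M=m$ the observations $Y_1,\dots,Y_N$ are independent with $Y_n\sim p_m^{u_n}$. Writing $u_n=\alpha v_n$ with $v_n\in\calV_K$ and using $\lambda_\textnormal{d}=\alpha^2 r_\textnormal{SN}$, $\Delta=1/N$, the rate of $p_m^{u_n}$ is $(\alpha^2/N)\Lambda_m^{v_n}$. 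The key elementary fact is that the Chernoff $s$-divergence between Poisson laws is positively homogeneous of degree one in the rates, $\mathbb{C}_s(\textnormal{Poi}(ca)\|\textnormal{Poi}(cb);\mu_{\mathbb{N}_0})=c\,\mathbb{C}_s(\textnormal{Poi}(a)\|\textnormal{Poi}(b);\mu_{\mathbb{N}_0})$, which together with additivity of $\mathbb{C}_s$ over products gives, for the product kernels induced by $u^N$,
\[
\sum_{n\in[N]}\mathbb{C}_s\big(p_\ell^{u_n}\,\big\|\,p_m^{u_n};\,\mu_{\mathbb{N}_0}\big)=\alpha^2\,\mathbb{E}_{V\sim\widehat Q_N}\!\big[\mathbb{C}_s(P_\ell^V\|P_m^V;\mu_{\mathbb{N}_0})\big],
\]
where $\widehat Q_N\triangleq\tfrac1N\sum_{n\in[N]}\delta_{v_n}$ is the rescaled empirical control distribution; this is exactly what makes the exponent scale with $n_\textnormal{s}=\alpha^2$ even though $N$ stays fixed. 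I will use freely that $s\mapsto\mathbb{C}_s$ is concave, that $v\mapsto\mathbb{C}_s(P_\ell^v\|P_m^v)$ and $v\mapsto|v|^2$ are bounded and continuous on the compact disc $\calD(R_\textnormal{CA})$ — hence $Q\mapsto\min_{\ell<m}\max_s\mathbb{E}_{V\sim Q}[\mathbb{C}_s(P_\ell^V\|P_m^V)]$ is weakly continuous and attains its supremum on the weakly compact set $\calQ(R_\textnormal{CA},R_\textnormal{CE})$ — and that for fixed $K,N$ only finitely many empirical distributions $\widehat Q_N$ arise.

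\emph{Achievability.} With the maximum-likelihood decision rule for a fixed $u^N$, a union bound over the $|\calM|-1$ incorrect hypotheses combined with the Chernoff--Markov inequality $\mathbb{P}_m(p_\ell\geq p_m)\leq\int p_\ell^s p_m^{1-s}\,\mathrm{d}\mu$ and the additivity identity above gives
\[
P_\textnormal{e}\leq(|\calM|-1)\exp\!\Big(-\alpha^2\min_{\substack{(\ell,m)\in\calM^2\\\ell<m}}\ \max_{s\in[0,1]}\mathbb{E}_{V\sim\widehat Q_N}\big[\mathbb{C}_s(P_\ell^V\|P_m^V;\mu_{\mathbb{N}_0})\big]\Big)
\]
for every admissible deterministic open-loop control sequence. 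Given any $Q\in\calQ(R_\textnormal{CA},R_\textnormal{CE})$, for $K$ large I would approximate $Q$ weakly by a distribution supported on $\calV_K$ that still satisfies the second-moment bound (snap mass to the nearest grid points, then move a vanishing fraction of mass to $0\in\calV_K$), and for $N$ large realize that distribution as the empirical distribution $\widehat Q_N$ of an admissible sequence $u^N\in\calU_{\alpha,K}^N$; weak continuity then makes the exponent above arbitrarily close to $\min_{\ell<m}\max_s\mathbb{E}_Q[\mathbb{C}_s]$. Taking $\liminf_{K,N\to\infty}$ and then the supremum over $Q$ yields both the stated upper bound on the minimal Bayesian error and $\beta_\textnormal{OL}\geq\sup_Q\min_{\ell<m}\max_s\mathbb{E}_Q[\mathbb{C}_s]$.

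\emph{Converse.} Fix $\epsilon>0$ and an arbitrary admissible open-loop pair $(q,\delta)$. Markov's inequality applied to the energy constraint shows $\mathbb{P}_q\big(\tfrac1N\sum_{n}|U_n|^2\leq(1+\epsilon)\calE_\alpha\big)\geq\epsilon/(1+\epsilon)$, so a bounded-away-from-zero fraction of the control probability lands on sequences whose empirical distribution lies in $\calQ(R_\textnormal{CA},(1+\epsilon)R_\textnormal{CE})$. For such a $u^N$, restricting the $|\calM|$-ary test to any pair $\{\ell,m\}$ gives $P_\textnormal{e}(u^N,\delta)\geq\tfrac{2}{|\calM|}P_\textnormal{e}^{\ast,\mathrm{bin}}\big(\prod_n p_\ell^{u_n},\prod_n p_m^{u_n}\big)$, and the binary Bayes error of these Poisson product kernels I would lower bound by exponentially tilting the log-likelihood ratio at the optimal $s^\ast$ (the maximizer of $\mathbb{E}_{\widehat Q_N}[\mathbb{C}_s]$) and then applying a Berry--Esseen/local-CLT estimate to the tilted, mean-zero sum, yielding $P_\textnormal{e}^{\ast,\mathrm{bin}}\geq c\,e^{-O(\alpha)}\exp\!\big(-\alpha^2\max_s\mathbb{E}_{\widehat Q_N}[\mathbb{C}_s(P_\ell^V\|P_m^V)]\big)$ with constants uniform over the finitely many admissible $\widehat Q_N$. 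Combining these and using $\widehat Q_N\in\calQ(R_\textnormal{CA},(1+\epsilon)R_\textnormal{CE})$ gives $P_\textnormal{e}(q,\delta)\geq\tfrac{\epsilon}{1+\epsilon}\,c\,e^{-O(\alpha)}\exp(-\alpha^2 b^\ast_{1+\epsilon})$ with $b^\ast_{1+\epsilon}\triangleq\sup_{Q\in\calQ(R_\textnormal{CA},(1+\epsilon)R_\textnormal{CE})}\min_{\ell<m}\max_s\mathbb{E}_Q[\mathbb{C}_s]$, uniformly in $(q,\delta)$. Sending $\alpha\to\infty$, then $K,N\to\infty$, then $\epsilon\downarrow0$ — and using weak compactness to show $b^\ast_{1+\epsilon}\downarrow\sup_{Q\in\calQ(R_\textnormal{CA},R_\textnormal{CE})}\min_{\ell<m}\max_s\mathbb{E}_Q[\mathbb{C}_s]$ — yields $\beta_\textnormal{OL}\leq\sup_Q\min_{\ell<m}\max_s\mathbb{E}_Q[\mathbb{C}_s]$, which closes the characterization.

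\emph{Main obstacle.} The delicate step is the sharp binary Chernoff \emph{lower} bound in this nonstandard regime, where the observation kernels are Poisson with rates growing like $\alpha^2/N$ with $N$ held fixed rather than being i.i.d.\ copies: the crude Bhattacharyya bound loses a factor of two in the exponent, so the tilting-plus-local-CLT argument must be executed carefully, checking that all multiplicative corrections are at most $e^{O(\alpha)}$ and therefore vanish after dividing by $\alpha^2$. Two further technical points are the treatment of randomized open-loop policies under an energy constraint enforced only in expectation (handled by the Markov $(1+\epsilon)$-relaxation together with right-continuity of the constrained supremum in the energy budget) and the weak-continuity/density lemmas used to approximate an arbitrary $Q\in\calQ(R_\textnormal{CA},R_\textnormal{CE})$ by empirical distributions on $\calV_K$ that respect the second-moment constraint.
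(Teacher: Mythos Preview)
Your proposal is correct and follows the paper's overall architecture: achievability via the ML rule plus a Chernoff/union bound, the key Poisson homogeneity identity $\sum_{n}\mathbb{C}_s(p_\ell^{u_n}\|p_m^{u_n})=\alpha^2\,\mathbb{E}_{V\sim\widehat Q_N}[\mathbb{C}_s(P_\ell^V\|P_m^V)]$, and a converse via exponential tilting and change of measure. Two of your technical choices are, however, heavier than what the paper actually needs. First, for randomized open-loop policies the paper simply observes (citing the controlled-sensing literature) that one may restrict to \emph{pure} control sequences without loss of generality, because the error probability under a mixture is a convex combination of the pure-policy errors and the energy constraint for a pure open-loop policy is deterministic; this sidesteps your Markov $(1+\epsilon)$-relaxation and the subsequent limit $\epsilon\downarrow0$ in the energy budget. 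Second, the step you flag as the main obstacle --- the sharp binary Chernoff lower bound in the regime where each Poisson rate is $\Theta(\alpha^2/N)$ with $N$ fixed --- does not require Berry--Esseen or a local CLT: after tilting at the optimal $s^\star$, the centered log-likelihood sum has variance exactly $\alpha^2\,\mathbb{E}_{\widehat Q_N}\!\big[(s^\star)^2(\log(\Lambda_\ell^V/\Lambda_m^V))^2(\Lambda_\ell^V)^{s^\star}(\Lambda_m^V)^{1-s^\star}\big]\leq C(R_\textnormal{SN},R_\textnormal{CA})\,\alpha^2$, so Chebyshev's inequality already gives a correction of order $e^{O(\alpha)}$ (in fact $e^{o(\alpha^2)}$), which vanishes after dividing by $\alpha^2$. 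Your route works, but the paper's is shorter on both counts.
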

     \begin{proof}
         See Appendix~\ref{appen-proof-thm}.
     \end{proof}
     
     \begin{proposition}
     \label{prop:policy}
     Set $\calM=\{0,1\}$ with $H_0:~\ket{\alpha_0}=\ket{-\alpha}$ and $H_1:~\ket{\alpha_1}=\ket{\alpha}$ for some $\alpha>0$.
     Let $R_\textnormal{CA}=1$ and $R_\textnormal{CE}\leq 1$.
     When the SNR $R_\textnormal{SN}$ is large enough, there is an exponent-optimal open-loop control policy that is a time-sharing policy between the zero displacement $0$ and the Kennedy displacement $\alpha$. However, there exists some pair of SNR $R_\textnormal{SN}$ and $R_\textnormal{CE}$ values under which any open-loop control policy that is time-sharing between $0$ and $\alpha$ is not exponent-optimal.
     \end{proposition}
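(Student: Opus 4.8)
\emph{Proof strategy.}
Write $r_\textnormal{SN}\eqdef 1/R_\textnormal{SN}$ and take $\phi_0=\pi,\ \phi_1=0$ (so $\Lambda_0^v=|v-1|^2+r_\textnormal{SN}$, $\Lambda_1^v=|v+1|^2+r_\textnormal{SN}$), and set $h_s(v)\eqdef\mathbb{C}_s(P_0^v\|P_1^v;\mu_{\bbN_0})$. Since $|\calM|=2$ the inner minimum in Theorem~\ref{thm:char} is vacuous, and since a supremum commutes with a maximum, $\beta_\textnormal{OL}=\max_{s\in[0,1]}\sup_{Q\in\calQ(1,R_\textnormal{CE})}\bbE_{V\sim Q}[h_s(V)]$. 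Evaluating the integral in the Chernoff divergence gives $\mathbb{C}_s(\textnormal{Poi}(a)\|\textnormal{Poi}(b);\mu_{\bbN_0})=sa+(1-s)b-a^s b^{1-s}$; in particular $h_s(0)=0$ because $\Lambda_0^0=\Lambda_1^0=1+r_\textnormal{SN}$. Hence a ``time-sharing policy between $0$ and $\alpha$'' is exactly $Q=(1-p)\delta_0+p\,\delta_1$ with $p\in[0,R_\textnormal{CE}]$, and its exponent is $p\max_s h_s(1)=p\,D(r_\textnormal{SN})$ where $D(r_\textnormal{SN})\eqdef\max_{s\in[0,1]}\mathbb{C}_s(\textnormal{Poi}(r_\textnormal{SN})\|\textnormal{Poi}(4+r_\textnormal{SN});\mu_{\bbN_0})$; among such policies the best takes $p=R_\textnormal{CE}$ and achieves $R_\textnormal{CE}D(r_\textnormal{SN})$.

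For the first assertion the plan is to prove $\beta_\textnormal{OL}=R_\textnormal{CE}D(r_\textnormal{SN})$, so that the two-point policy above is exponent-optimal; ``$\geq$'' is the previous paragraph. For ``$\leq$'' I would establish the key lemma: \emph{for $r_\textnormal{SN}$ below some threshold $r^\star$, $h_s(\rho)\leq\rho^2 h_s(1)$ for all $\rho\in[0,1]$ and $s\in[0,1]$} -- i.e.\ the divergence per unit energy $h_s(\rho)/\rho^2$ is maximized at the Kennedy displacement $\rho=1$. Granting this, reduce an arbitrary feasible $Q$ to the real axis: $\Lambda_0^v+\Lambda_1^v=2(|v|^2+1+r_\textnormal{SN})$ depends only on $|v|$, $\Lambda_1^v-\Lambda_0^v=4\,\mathrm{Re}(v)$, and $\mathbb{C}_s(\textnormal{Poi}(a)\|\textnormal{Poi}(b);\mu_{\bbN_0})$ is nondecreasing in $|a-b|$ at fixed $a+b$ (immediate at $s=\tfrac12$, where it equals $\tfrac12(\sqrt a-\sqrt b)^2$, and a short calculus check in general), so $h_s(v)\leq\max\{h_s(|v|),h_s(-|v|)\}$ and $h_s(-\rho)=h_{1-s}(\rho)$; combined with the key lemma and $\max_s h_s(-1)=\max_s h_{1-s}(1)=\max_s h_s(1)$, this gives, for every feasible $Q$, $\max_s\bbE_Q[h_s(V)]\leq R_\textnormal{CE}\max_s h_s(1)=R_\textnormal{CE}D(r_\textnormal{SN})$, as required.

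\textbf{The key lemma is the crux}, and ``$R_\textnormal{SN}$ large enough'' cannot be dropped: at $s=\tfrac12$ one has the exact identity $h_{1/2}(\rho)=\tfrac12\bigl(\sqrt{(1+\rho)^2+r_\textnormal{SN}}-\sqrt{(1-\rho)^2+r_\textnormal{SN}}\bigr)^2$, whence $h_{1/2}(\rho)/\rho^2=\tfrac{2}{r_\textnormal{SN}}\bigl(1-\tfrac{1+\rho^2}{r_\textnormal{SN}}+O(r_\textnormal{SN}^{-2})\bigr)$, which is \emph{decreasing} in $\rho$ -- so the lemma genuinely fails at small SNR and forces a threshold. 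At $r_\textnormal{SN}=0$ it holds: $h_{1/2}(\rho)=2\rho^2$ on $[0,1]$ exactly, and for general $s$ the inequality $h_s(\rho)\leq\rho^2 h_s(1)$ follows from the closed form, the only delicate regime being the corner $s\to0,\ \rho\to1$, where both sides tend to $4$. The obstacle in pushing this to $r_\textnormal{SN}>0$ is that the limit $r_\textnormal{SN}\to0$ is singular -- the maximizing $s$ and the worst $\rho$ both migrate to the boundary -- so mere continuity is insufficient; I would instead either (i) prove the stronger structural fact that $t\mapsto h_s(\sqrt t)$ is convex on $[0,1]$ for every $s$ once $r_\textnormal{SN}$ is below an explicit threshold (convexity together with $h_s(0)=0$ yields $h_s(\sqrt t)\leq t\,h_s(1)$ at once) by controlling the sign of the relevant second derivative, or (ii) carry out a uniform perturbation about $r_\textnormal{SN}=0$ that treats the boundary corner $\rho=1-\varepsilon$, $s\asymp\log\log(1/\varepsilon)/\log(1/\varepsilon)$ separately.

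For the second assertion it suffices to produce one pair $(R_\textnormal{SN},R_\textnormal{CE})$ at which some feasible open-loop policy beats $R_\textnormal{CE}D(r_\textnormal{SN})$. Take $R_\textnormal{CE}\in(0,1)$ and $R_\textnormal{SN}$ small (so $r_\textnormal{SN}$ large), and pit the time-sharing policies against the \emph{constant-displacement} policy $Q=\delta_{\sqrt{R_\textnormal{CE}}}$, which is feasible ($\sqrt{R_\textnormal{CE}}\leq1$ and $\bbE_Q[|V|^2]=R_\textnormal{CE}$) and has exponent at least $h_{1/2}(\sqrt{R_\textnormal{CE}})$. From the exact $s=\tfrac12$ formula, $h_{1/2}(\rho)=\tfrac{2\rho^2}{r_\textnormal{SN}}-\tfrac{2\rho^2(1+\rho^2)}{r_\textnormal{SN}^2}+O(r_\textnormal{SN}^{-3})$, while $D(r_\textnormal{SN})=h_{1/2}(1)+O(r_\textnormal{SN}^{-3})=\tfrac{2}{r_\textnormal{SN}}-\tfrac{4}{r_\textnormal{SN}^2}+O(r_\textnormal{SN}^{-3})$ -- the first equality because the maximizer of $s\mapsto h_s(1)$ is $\tfrac12+O(r_\textnormal{SN}^{-1})$, which perturbs the value only at order $r_\textnormal{SN}^{-3}$ (this is the one computation to verify, via $\partial_s h_s(1)$ at $s=\tfrac12$). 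Therefore
\[
h_{1/2}\bigl(\sqrt{R_\textnormal{CE}}\bigr)-R_\textnormal{CE}\,D(r_\textnormal{SN})=\frac{2R_\textnormal{CE}(1-R_\textnormal{CE})}{r_\textnormal{SN}^2}+O\bigl(r_\textnormal{SN}^{-3}\bigr)>0
\]
for $r_\textnormal{SN}$ large enough, since $R_\textnormal{CE}<1$. Hence the constant-displacement policy strictly outperforms every time-sharing policy between $0$ and $\alpha$, so none of the latter is exponent-optimal. The only nontrivial point here is that the second-order (not merely first-order) asymptotics of $D(r_\textnormal{SN})$ is needed, which the expansion above supplies.
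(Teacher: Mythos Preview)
Your approach for the second assertion (low SNR) is correct and in fact cleaner than the paper's: you give an asymptotic argument as $r_\textnormal{SN}\to\infty$ showing the constant-displacement policy beats time-sharing, whereas the paper simply exhibits a single numerical example ($r_\textnormal{SN}=0.01$, $R_\textnormal{CE}=0.9$) and evaluates both exponents. Your large-$r_\textnormal{SN}$ expansion and the identification of $D(r_\textnormal{SN})=h_{1/2}(1)+O(r_\textnormal{SN}^{-3})$ are correct.

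However, your ``key lemma'' for the first assertion is \emph{false as stated}, and this is a genuine gap. You claim $h_s(\rho)\leq\rho^2 h_s(1)$ for all $s\in[0,1]$ and $\rho\in[0,1]$ once $r_\textnormal{SN}$ is small enough, and you assert that at $r_\textnormal{SN}=0$ it ``follows from the closed form.'' It does not: at $r_\textnormal{SN}=0$, $s=3/4$, $\rho=0.9$ one computes $h_{3/4}(0.9)=\tfrac34(0.01)+\tfrac14(3.61)-(0.1)^{3/2}(1.9)^{1/2}\approx 0.866$, while $\rho^2 h_{3/4}(1)=0.81\cdot 1=0.81$, so the inequality is violated. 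Correspondingly, your option~(i) --- convexity of $t\mapsto h_s(\sqrt t)$ for \emph{every} $s$ --- also fails for $s>1/2$ (the second-derivative expression the paper computes is negative at $s=3/4$, $v=0.9$, $r_\textnormal{SN}=0$). The difficulty is not at the corner $s\to 0$, $\rho\to 1$ you flag, but throughout the range $s\in(1/2,1)$.

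The paper's fix is to insert an additional step you are missing: it first shows (its Claim~2) that for any admissible $Q$ supported on $[0,1]$ the maximizer $s^\star(Q)=\arg\max_s \bbE_Q[h_s(V)]$ necessarily lies in $(0,1/2]$, using the explicit closed form for the Chernoff exponent between two Poissons and the fact that $\Lambda_0^v<\Lambda_1^v$ for $v\in(0,1]$. Only then does it prove convexity of $t\mapsto h_s(\sqrt t)$, and only for $s\in(0,1/2]$ (its Claim~1), which together with $h_s(0)=0$ forces the optimal $Q$ to be the two-point time-sharing law. Your argument can be repaired along these lines, but not without this extra reduction on $s$; the uniform-in-$s$ route you propose does not go through.
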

    \begin{proof}
         See Appendix~\ref{appen-proof-prop}.
     \end{proof}

    \subsection{Simulation results}
    \label{subsec:simulation}
        We present simulation results for the case $\calM=\{0,1\}$ with $H_0:~\ket{\alpha_0}=\ket{-\alpha}$ and $H_1:~\ket{\alpha_1}=\ket{\alpha}$ for some $\alpha>0$. In Fig.~\ref{fig:Pe_vs_n_s_high_SNR} and Fig.~\ref{fig:Pe_vs_n_s_low_SNR}, we plot the probability of error versus mean photon number $n_\textnormal{s}=\alpha^2$ of our method (``Ours"), the homodyne detector and the Helstrom bound. In our method, the error probability upper bound in Theorem~\ref{thm:char} is plotted with an improved pre-factor of $\frac{1}{2}$ instead of $|\calM|-1=1$, as can be shown to be valid in the binary case. We choose $R_\textnormal{CA}=1$, $R_\textnormal{CE}=1$ for both figures; the SNR $R_\textnormal{SN}=10^6$ in Fig.~\ref{fig:Pe_vs_n_s_high_SNR} and $R_\textnormal{SN}=10^2$ in Fig.~\ref{fig:Pe_vs_n_s_low_SNR}. The theoretical probability of error without considering dark count are plotted for the homodyne detector and the Helstrom bound. We observe that our method outperforms the homodyne detector for $\alpha^2\gtrsim 1$ in ``high" SNR (Fig.~\ref{fig:Pe_vs_n_s_high_SNR}) and for $\alpha^2\gtrsim 1.8$ in ``low" SNR (Fig.~\ref{fig:Pe_vs_n_s_low_SNR}). 
        
        In Fig.~\ref{fig:Pe_vs_R_CE_high_SNR} and Fig.~\ref{fig:Pe_vs_R_CE_low_SNR}, we plot the probability of error versus $R_\textnormal{CE}$, with $\alpha^2=2$ and $R_\textnormal{CA}=1$ fixed. For the homodyne detector and the Helstrom bound, we plot the theoretical probability of error, without considering dark counts, as a horizontal line as they do not depend on $R_\textnormal{CE}$. We choose the SNR  $R_\textnormal{SN}=10^6$ in Fig.~\ref{fig:Pe_vs_R_CE_high_SNR} and $R_\textnormal{SN}=10^2$ in Fig.~\ref{fig:Pe_vs_R_CE_low_SNR}. Similarly, we see that when the SNR is ``high" (Fig.~\ref{fig:Pe_vs_n_s_high_SNR}), our method outperforms the homodyne detector when $R_\textnormal{CE}\gtrsim 0.67$, while when the SNR is ``low" (Fig.~\ref{fig:Pe_vs_n_s_low_SNR}), our method outperforms the homodyne detector when $R_\textnormal{CE}\gtrsim 0.92$.

        Fig.~\ref{fig:Pe_vs_R_CE_high_SNR} and Fig.~\ref{fig:Pe_vs_R_CE_low_SNR} also agree with Proposition~\ref{prop:policy}. In the former, when SNR is ``high", the straight line indicates exponent-optimal time-sharing policies between displacement $0$ and $\alpha$; in the latter, when SNR is ``low", the slight curve around $R_\textnormal{CE}\in[0.9,1]$ indicates that time-sharing policies between displacement $0$ and $\alpha$ are no longer exponent-optimal.
        
        \begin{figure}
            \centering
            \includegraphics[width=0.65\linewidth]{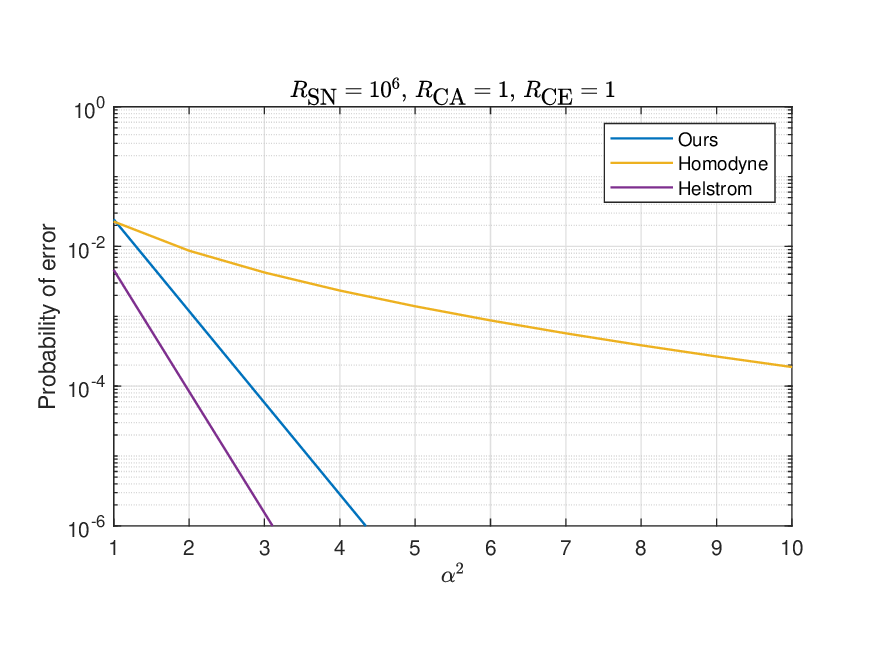}
                \caption{Probability of error versus $\alpha^2$. Parameter values are $R_\textnormal{SN}=10^{6}$ ($r_\textnormal{SN}=10^{-6}$), $R_\textnormal{CA}=1$ and $R_\textnormal{CE}=1$.
                }
                \label{fig:Pe_vs_n_s_high_SNR}
        \end{figure}

        \begin{figure}
            \centering
            \includegraphics[width=0.65\linewidth]{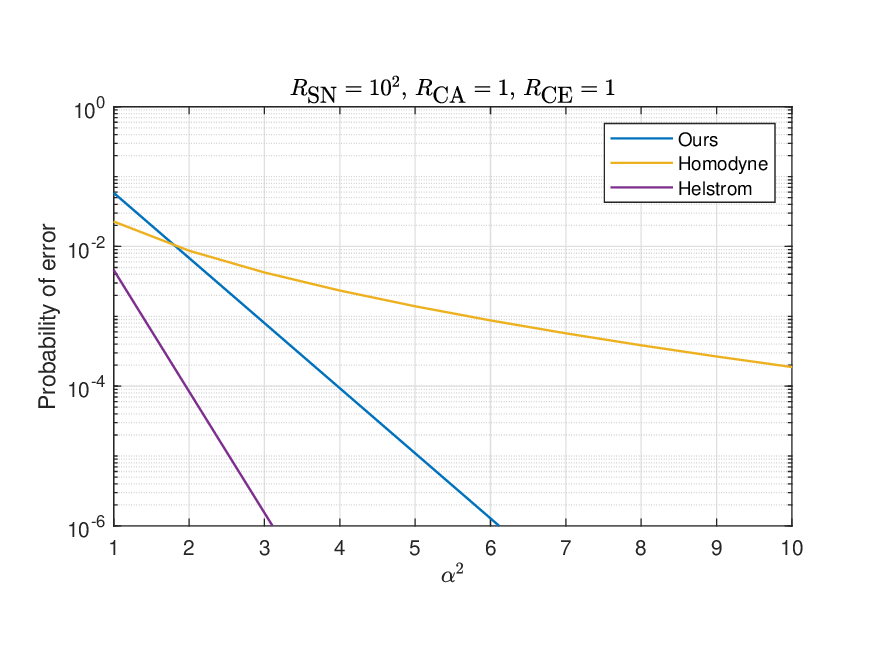}
                \caption{Probability of error versus $\alpha^2$. Parameter values are $R_\textnormal{SN}=10^{2}$ ($r_\textnormal{SN}=10^{-2}$), $R_\textnormal{CA}=1$ and $R_\textnormal{CE}=1$.
                }
                \label{fig:Pe_vs_n_s_low_SNR}
        \end{figure}
    
        \begin{figure}
            \centering
            \includegraphics[width=0.65\linewidth]{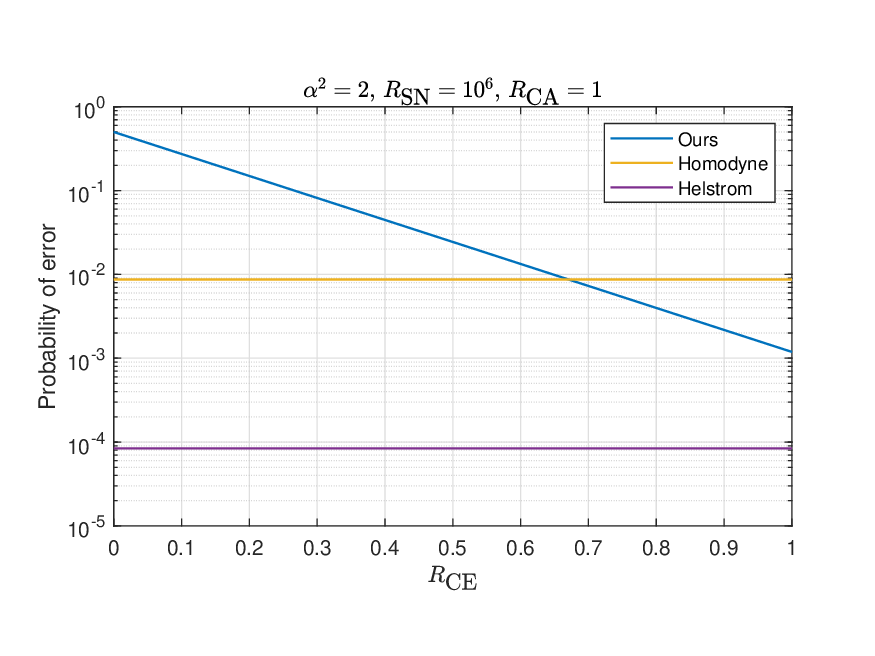}
                \caption{Probability of error versus $R_\textnormal{CE}$. Parameter values are $\alpha^2=2~(\text{photon}/\text{s})$, $R_\textnormal{SN}=10^{6}$ ($r_\textnormal{SN}=10^{-6}$), and $R_\textnormal{CA}=1$.
                }
                \label{fig:Pe_vs_R_CE_high_SNR}
        \end{figure}

        \begin{figure}
                \centering
                \includegraphics[width=0.65\linewidth]{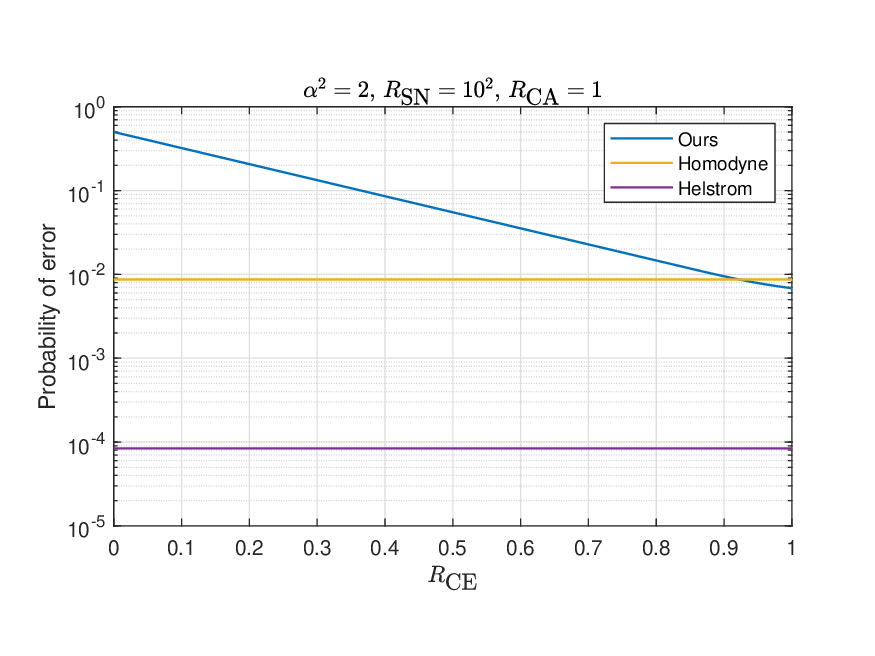}
                \caption{Probability of error versus $R_\textnormal{CE}$. Parameter values are $\alpha^2=2~(\text{photon}/\text{s})$, $R_\textnormal{SN}=10^{2}$ ($r_\textnormal{SN}=10^{-2}$), and $R_\textnormal{CA}=1$.
                }
                \label{fig:Pe_vs_R_CE_low_SNR} 
            \end{figure}

\appendices        
\section{Proof of Theorem \ref{thm:char}}
\label{appen-proof-thm}
    
    First, we give an ``achievability" proof for the exponent and then derive a closely related upper bound on the error probability. We follow the proof strategy of \cite[Theorem 1]{nitinawarat_controlled_2013} closely but also make necessary adjustments to account for the fact that in our case, the observation kernels change with the sample size $N$. Also, to obtain a valid upper bound on error probability, we take extra care in following the multiplicative constants throughout the proof.

    Fix $\alpha>0$, the operating parameters $R_\textnormal{SN},R_\textnormal{CA},R_\textnormal{CE}>0$ and $K,N\in\bbN$. Recall the definitions of $\calU_{\alpha,K}$ and $\calE_\alpha$, and we use the shorthand $p_m^u = p_m^u \left[ \alpha , \frac{\alpha^2}{R_\textnormal{SN}} , \frac{1}{N} \right].$
    
    Choose an arbitrary $ \tilde{u}^N = (\tilde{u}_1,\cdots,\tilde{u}_N) \in \calU_{\alpha,K}^N $ such that $\frac{1}{N} \sum_{n\in[N]} |\tilde{u}_n|^2 \leq \calE_\alpha$. Under the pure open-loop control policy $q_{\tilde{u}^N}$ defined by $q_{\tilde{u}^N} (u^N) \triangleq \mathbb{I}[u^N=\tilde{u}^N]$, it is well-known from statistical decision theory that the optimal decision rule $\delta:\calY^N\times\calU_{\alpha,K}^N\to\calM$ is the maximum likelihood test $\delta_\textnormal{ML}^{\tilde{u}^N}:\calY^N\to\calM$ whose form depends on $\tilde{u}^N$. Following \cite[eq. (34), (35)]{nitinawarat_controlled_2013}, we obtain that, similar to \cite[eq. (36)]{nitinawarat_controlled_2013}, for all $(\ell,m)\in\calM^2$,
    \begin{align}
        &
        \bbP_\ell \left\{ \delta_\textnormal{ML}^{\tilde{u}^N} \left(Y^N\right) = m \right\}
        +
        \bbP_m \left\{ \delta_\textnormal{ML}^{\tilde{u}^N} \left(Y^N\right) = \ell \right\}
        \nonumber
        \\
        \leq~&
        2
        \exp
        \left(
        -
        N
        \max_{s\in[0,1]}
        \sum_{u\in\calU_{\alpha,K}}
        \bar{q}_{\tilde{u}^N}(u)
        \mathbb{C}_s( p_\ell^{u} \| p_m^{u} ; \mu_{\bbN_0} )
        \right)
        ,
         \label{eq:sym_err_upp_bdd}
    \end{align}
        where $\bar{q}_{\tilde{u}^N}:\calU_{\alpha,K}\to [0,1]$ is the type (i.e., empirical distribution) of $\tilde{u}^N$ defined as 
        \begin{align*}
            \bar{q}_{\tilde{u}^N}(u) 
            \eqdef
            \frac{1}{N} \sum_{n\in[N]} \mathbb{I}[u=\tilde{u}_n]
            .
        \end{align*}
        We include an extra factor of $2$ in \eqref{eq:sym_err_upp_bdd} compared to \cite[eq. (36)]{nitinawarat_controlled_2013} because the observation kernels $p^u_m$ are discrete, hence $\bbP_\ell\{ p_\ell(Y^N)=p_m(Y^N)\}>0$ and $\bbP_m\{ p_\ell(Y^N)=p_m(Y^N)\}>0$, while the exact upper bound given in \cite[eq. (36)]{nitinawarat_controlled_2013} is valid for continuous observation kernels. Then we obtain
        \begin{align*}
            &
            \inf_{q\in\mathcal{Q}_\textnormal{OL}(\calU_{\alpha,K}^N,E,\calE_\alpha)}
            \inf_{\delta:\calY^N\times\calU_{\alpha,K}^N\to\calM}
            P_\textnormal{e}
            \left(
            \left\lbrace 
            p_m^u
            \right\rbrace^{u\in\calU_{\alpha,K}}_{m\in\calM}
            , ( q , \delta )
            \right)
            \\
            &~\leq
            \inf_{\delta:\calY^N\times\calU_{\alpha,K}^N\to\calM}
            P_\textnormal{e}
            \left( \{p_m^u\}_{m\in\calM}^{u\in\calU_{\alpha,K}}
            , 
            \left( q_{\tilde{u}^N} , \delta \right) \right)
            \\
            &~=
            P_\textnormal{e}
            \left( \{p_m^u\}_{m\in\calM}^{u\in\calU_{\alpha,K}}
            , \left( q_{\tilde{u}^N} 
            , \delta^{\tilde{u}^N}_\textnormal{ML} \right)
            \right)
            \\
            &~\eqdef
            \frac{1}{|\calM|}\sum_{m\in\calM}
            \bbP_m\left\{
            \delta^{\tilde{u}^N}_\textnormal{ML}
            \left( Y^N \right)
            \neq m
            \right\}
            \\
            &~=
            \frac{1}{|\calM|}
            \sum_{\substack{(\ell,m)\in\calM^2\\\ell<m}}
            \left(
            \bbP_\ell \left\{
            \delta^{\tilde{u}^N}_\textnormal{ML}
            \left( Y^N \right)
            = m
            \right\}
            +
            \bbP_m \left\{
            \delta^{\tilde{u}^N}_\textnormal{ML}
            \left( Y^N \right)
            = \ell
            \right\}
            \right)
            \\
            &\leq
            \frac{2}{|\calM|}
            \sum_{\substack{(\ell,m)\in\calM^2\\\ell<m}}
            \exp
            \left(
            -
            N
            \max_{s\in[0,1]}
            \sum_{u\in\calU_{\alpha,K}}
            \bar{q}_{\tilde{u}^N}(u)
            \mathbb{C}_s( p_k^{u} \| p_m^{u} ; \mu_{\bbN_0} )
            \right)
        \end{align*}
        where the penultimate equality is obtained through the symmetrization step in \cite[eq. (33)]{nitinawarat_controlled_2013}.
        
        Next, we apply properties of $p_m^u = p_m^u \left[ \alpha , \frac{\alpha^2}{R_\textnormal{SN}} , \frac{1}{N} \right] = \textnormal{Poi}( \lambda^u_m )$, where 
        \begin{align}
            \label{eq:lambda-Lambda-conver}
            \lambda^u_m
            =
            \frac{1}{N}
            \left(
            \left\vert
            u + \alpha \textnormal{e}^{i\phi_m}
            \right\vert^2
            +
            \frac{\alpha^2}{R_\textnormal{SN}}
            \right)
            =
            \frac{\alpha^2}{N}
            \Lambda_m^v
            ,
        \end{align}
        where $v=u/\alpha\in\calV_K$, to rewrite this upper bound in terms of the mean photon number $n_\textnormal{s}=\alpha^2$. In particular, since a straightforward calculation of the Chernoff $s$-divergence between Poissons via the definition yields that for any $\lambda_0,\lambda_1>0$,
        \begin{align*}
            \mathbb{C}_s
            (
            \text{Poi}(\lambda_0) \| \text{Poi}(\lambda_1) ; \mu_{\mathbb{N}_0}
            )
            &= s \lambda_0 
            + (1-s) \lambda_1
            - \lambda_0^s \lambda_1^{1-s}
            \\
            &\leq
            s \lambda_0 
            + (1-s) \lambda_1
            \leq
            \max \left\{ \lambda_0 , \lambda_1 \right\}
            ,
        \end{align*}
        we see that
        \begin{align}
            \label{eq:p-to-P-conversion}
            \mathbb{C}_s( p_\ell^{u} \| p_m^{u} ; \mu_{\bbN_0} )
            =
            \frac{\alpha^2}{N}
            \mathbb{C}_s( P_\ell^{v} \| P_m^{v} ; \mu_{\bbN_0} )
            ,
        \end{align}
        where for any $(\ell,m)\in\calM^2$ and $v\in\calV_K\subset\calD(R_\textnormal{CA})$ we have
        \begin{align}
            \begin{split} \label{eq:Cher_div_unif_bdd}
            0
            \leq
            \mathbb{C}_s( P_\ell^{v} \| P_m^{v} ; \mu_{\bbN_0} )
            &\leq
            \max \left\{
            \Lambda_\ell^v, \Lambda_m^v
            \right\}
            =
            \max \left\{
            | v + \textnormal{e}^{i\phi_\ell} |^2
            ,
            | v + \textnormal{e}^{i\phi_m} |^2
            \right\}
            + r_\textnormal{SN}
            \\
            &\leq
            (|v|+1)^2 + r_\textnormal{SN}
            \leq
            (R_\textnormal{CA}+1)^2 + r_\textnormal{SN}
            <
            \infty
            ,
            \end{split}
        \end{align}
        so that $\mathbb{C}_s( P_\ell^{v} \| P_m^{v} ; \mu_{\bbN_0} )$ is uniformly bounded for any $(\ell,m)\in\calM^2$. Applying the $\alpha^2/N$ scaling in the previously obtained upper bound, we can re-write
        \begin{align} 
            &
            \inf_{q\in\mathcal{Q}_\textnormal{OL}(\calU_{\alpha,K}^N,E,\calE_\alpha)}
            \inf_{\delta:\calY^N\times\calU_{\alpha,K}^N\to\calM}
            P_\textnormal{e}
            \left(
            \left\lbrace 
            p_m^u
            \right\rbrace^{u\in\calU_{\alpha,K}}_{m\in\calM}
            , ( q , \delta )
            \right)
            \nonumber \\
            &~\leq
            \frac{2}{|\calM|}
            \sum_{\substack{(\ell,m)\in\calM^2\\\ell<m}}
            \exp
            \left(
            -
            \alpha^2
            \max_{s\in[0,1]}
            \sum_{v\in\calV_K}
            \bar{q}_{\tilde{v}^N}(v)
            \mathbb{C}_s( P_\ell^{v} \| P_m^{v} ; \mu_{\bbN_0} )
            \right)
            \label{eq:err_upp_bdd_tight}
        \end{align}
        where $\tilde{v}^N\eqdef(\tilde{v}_1,\ldots,\tilde{v}_N)\in\calV_K^N$ with $\tilde{v}_n\eqdef\tilde{u}_n/\alpha$, and $\bar{q}_{\tilde{v}^N}(v)$ is the type of $\tilde{v}^N$ on $\calV_K$. Notice that $\bar{q}_{\tilde{v}^N}\in\calQ(\calV_K,N,R_\textnormal{CE})\eqdef\{\bar{q}:\calV_K\to\{0,1/N,\ldots,1\}\vert \sum_{v\in\calV_K} \bar{q}(v)=1 , \sum_{v\in\calV_K} |v|^2 \bar{q}(v)\leq R_\textnormal{CE}\}$. To get a lower bound on the exponent, we can follow \cite{nitinawarat_controlled_2013} to bound each exponential decay term in \eqref{eq:err_upp_bdd_tight} by the largest one, resulting in the bound
        \begin{align}
             &
             \inf_{q\in\mathcal{Q}_\textnormal{OL}(\calU_{\alpha,K}^N,E,\calE_\alpha)}
            \inf_{\delta:\calY^N\times\calU_{\alpha,K}^N\to\calM}
            P_\textnormal{e}
            \left(
            \left\lbrace 
            p_m^u
            \right\rbrace^{u\in\calU_{\alpha,K}}_{m\in\calM}
            , ( q , \delta )
            \right)
            \nonumber \\
            &~\leq
            (|\calM|-1)
            \exp
            \left(
            -
            \alpha^2
            \min_{\substack{(\ell,m)\in\calM^2\\\ell<m}}
            \max_{s\in[0,1]}
            \bbE_{V\sim \bar{q}_{\tilde{v}^N}}
            \left[
            \mathbb{C}_s( P_\ell^{V} \| P_m^{V} ; \mu_{\bbN_0} )
            \right]
            \right)
            .
            \label{eq:err_upp_bdd_expon}
        \end{align}
        This probability of error upper bound further results in
        \begin{align*}
            &
            - \frac{1}{\alpha^2}
             \inf_{q\in\mathcal{Q}_\textnormal{OL}(\calU_{\alpha,K}^N,E,\calE_\alpha)}
            \inf_{\delta:\calY^N\times\calU_{\alpha,K}^N\to\calM}
            \log
            P_\textnormal{e}
            \left(
            \left\lbrace 
            p_m^u
            \right\rbrace^{u\in\calU_{\alpha,K}}_{m\in\calM}
            , ( q , \delta )
            \right)
            \\
            &~\geq
            -\frac{\log(|\calM|-1)}{\alpha^2}
            +
            \min_{\substack{(\ell,m)\in\calM^2\\\ell<m}}
            \max_{s\in[0,1]}
            \bbE_{V\sim \bar{q}_{\tilde{v}^N}}
            \left[
            \mathbb{C}_s( P_\ell^{V} \| P_m^{V} ; \mu_{\bbN_0} )
            \right]
            ,
        \end{align*}
        where the second term on the right-hand side does not depend on $\alpha$. Hence, we get
        \begin{align}
            \begin{split} \label{eq:achiev_last_step}
            &- 
            \liminf_{\alpha\to\infty}
            \inf_{q\in\mathcal{Q}_\textnormal{OL}(\calU_{\alpha,K}^N,E,\calE_\alpha)}
            \inf_{\delta:\calY^N\times\calU_{\alpha,K}^N\to\calM}
            \frac{1}{\alpha^2}
            \log
            P_\textnormal{e}
            \left(
            \left\lbrace 
            p_m^u
            \right\rbrace^{u\in\calU_{\alpha,K}}_{m\in\calM}
            , ( q , \delta )
            \right)
            \\
            \geq~&
            \min_{\substack{(\ell,m)\in\calM^2\\\ell<m}}
            \max_{s\in[0,1]}
            \bbE_{V\sim \bar{q}_{\tilde{v}^N}}
            \left[
            \mathbb{C}_s( P_\ell^{V} \| P_m^{V} ; \mu_{\bbN_0} )
            \right]
            .
            \end{split}
        \end{align}

        For any $(\ell,m)\in\calM^2$, from its explicit expression, $\mathbb{C}_s( P_\ell^{v} \| P_m^{v} ; \mu_{\bbN_0} )$ can be seen to be continuous on $v\in\calD(R_\textnormal{CA})$, and hence it is \emph{uniformly} continuous on the compact set $\calD(R_\textnormal{CA})$. Also, by construction, $\calV_K$ is an $\Theta(1/K)$-net of $\calD(R_\textnormal{CA})$. Recall that $\tilde{u}^N\in\calU_{\alpha,K}^N$, and hence $\tilde{v}^N\in\calV_K^N$, is chosen arbitrarily at the start. Hence, taking limit superior on both sides of \eqref{eq:achiev_last_step}, we obtain
        \begin{align*}
            \beta_\textnormal{OL}
            \geq
            \min_{\substack{(\ell,m)\in\calM^2\\\ell<m}}
            \max_{s\in[0,1]}
            \bbE_{V\sim Q}
            \left[
            \mathbb{C}_s( P_\ell^{V} \| P_m^{V} ; \mu_{\bbN_0} )
            \right]
            ,
        \end{align*}
        for \emph{any} $Q\in\calQ(R_\textnormal{CA},R_\textnormal{CE})$. Taking supremum over $Q\in\calQ(R_\textnormal{CA},R_\textnormal{CE})$ further results in the desired lower bound:        
        \begin{align*}
            \beta_\textnormal{OL}
            \geq
            \sup_{Q\in\calQ(R_\textnormal{CA},R_\textnormal{CE})}
            \min_{\substack{(\ell,m)\in\calM^2\\\ell<m}}
            \max_{s\in[0,1]}
            \bbE_{V\sim Q}
            \left[
            \mathbb{C}_s( P_\ell^{V} \| P_m^{V} ; \mu_{\bbN_0} )
            \right]
            .
        \end{align*}
        
        Similarly, taking limit inferior as $(K,N)\to(\infty,\infty)$ on both sides of \eqref{eq:err_upp_bdd_expon}, we obtain 
        \begin{align*}
            &
            \liminf_{K,N\to\infty}
             \inf_{q\in\mathcal{Q}_\textnormal{OL}(\calU_{\alpha,K}^N,E,\calE_\alpha)}
            \inf_{\delta:\calY^N\times\calU_{\alpha,K}^N\to\calM}
            P_\textnormal{e}
            \left(
            \left\lbrace 
            p_m^u
            \right\rbrace^{u\in\calU_{\alpha,K}}_{m\in\calM}
            , ( q , \delta )
            \right)
            \\
            &~\leq
            (|\calM|-1)
            \exp
            \left(
            -
            \alpha^2
            \sup_{Q\in\calQ(R_\textnormal{CA},R_\textnormal{CE})}
            \min_{\substack{(\ell,m)\in\calM^2\\\ell<m}}
            \max_{s\in[0,1]}
            \bbE_{V\sim Q}
            \left[
            \mathbb{C}_s( P_k^{V} \| P_m^{V} ; \mu_{\bbN_0} )
            \right]
            \right)
            ,
        \end{align*}
        which, after we pin down the exponent by the following ``converse" proof, is the desired upper bound on the probability of error.
        
        Next, we give the converse proof that provides a matching upper bound for the exponent. We adapt the proof strategy of converse in \cite[Theorem 1]{nitinawarat_controlled_2013} to our case. In particular, since the last parameter of our observation kernels $\{p_m^u\left[\alpha,\alpha^2/R_\textnormal{SN},1/N\right]\}$ shrinks with the sample size $N$, it is \emph{a priori} unclear whether the argument in \cite[Theorem 1]{nitinawarat_controlled_2013} still results in the exponent we define.

        Fix $\alpha>0$, the operating parameters $R_\textnormal{SN},R_\textnormal{CA},R_\textnormal{CE}>0$ and $K,N\in\bbN$. Recall the definitions of $\calU_{\alpha,K}$ and $\calE_\alpha$, and we use the shorthand $p_m^u = p_m^u \left[ \alpha , \frac{\alpha^2}{R_\textnormal{SN}} , \frac{1}{N} \right].$

        Recall that any open-loop control policy $q = \{ q_n( u_n | u^{n-1}) \}_{n\in[N]} $ can be equivalently characterized by the joint PMF as $q=q(u^N)=\prod_{n\in [N]} q_n( u_n | u^{n-1})$. By definition of infimum, there is a sequence of tests $(q^{(j)},\delta^{(j)})_{j\in\bbN}$ such that
        \begin{align*}
            &
            \lim_{j\to\infty}
            P_\textnormal{e}
            \left(
            \left\lbrace p_m^u \left[ \alpha , \frac{\alpha^2}{R_\textnormal{SN}} , \frac{1}{N} \right]
            \right\rbrace^{u\in\calU_{\alpha,K}}_{m\in\mathcal{M}}
            , ( q^{(j)} , \delta^{(j)} )
            \right)
            \\=~&
            \inf_{q\in\mathcal{Q}_\textnormal{OL}(\calU_{\alpha,K}^N,E,\calE_\alpha)}
            \inf_{\delta:\calY^N\times\calU_{\alpha,K}^N\to\calM}
            P_\textnormal{e}
            \left(
            \left\lbrace 
            p_m^u
            \right\rbrace^{u\in\calU_{\alpha,K}}_{m\in\calM}
            , ( q , \delta )
            \right)
            .
        \end{align*}
        
        Fix any $j\in\bbN$. Sample a control sequence $U^N\sim q^{(j)}$, and denote the realization $u^N\in\calU_{\alpha,K}^N$. Define $v^N=u^N/\alpha,$ so $v^N\in\calU_{K}^N$.
        
        Fix any pair of hypotheses $(\ell,m)\in\calM^2$ with $\ell\neq m$. Then define
        \begin{align*}
            s^\star
            \eqdef
            \begin{cases}
                \argmax_{s\in[0,1]} \sum_{n\in [N]} 
                \bbC_s \left( p_{\ell}^{u_n} \| p_m^{u_n} ; \mu_{\bbN_0} \right) , & \mbox{if singleton,}
                \\
                \frac{1}{2} , & \mbox{otherwise.}
            \end{cases}
        \end{align*}
        From the strict convexity of Chernoff $s$-divergences w.r.t. $s$ between any distinct distributions, the argmax set is a singleton as soon as there is any $u_n$         such that $p_{\ell}^{u_n} \neq p_m^{u_n}$; on the other hand, if $p_{\ell}^{u_n} = p_m^{u_n}$ for all $n\in[N]$, we must have $\sum_{n\in [N]} \bbC_s \left( p_{\ell}^{u_n} \| p_m^{u_n} ; \mu_{u_n} \right)=0.$ Moreover, applying \eqref{eq:p-to-P-conversion}, we can rewrite
        \begin{align*}
            s^\star
            \eqdef
            \begin{cases}
                \argmax_{s\in[0,1]} \bbE_{V\sim\bar{q}_{v^N}}
                \left[ \bbC_s \left( P_{\ell}^{V} \| P_m^{V} ; \mu_{\bbN_0} \right) \right] , & \mbox{if singleton,}
                \\
                \frac{1}{2} , & \mbox{otherwise,}
            \end{cases}
        \end{align*}
        where $\bar{q}_{v^N}$ is the type of $v^N$. Hence we see that $s^\star$ only depends on $(\ell,m)$ and the type $\bar{q}_{v^N}$.
        
        For each $n\in[N]$, we further define the $s^\star$-tilted distribution between $p_\ell^{u_n}$ and $p_m^{u_n}$ via its density
        \begin{align}
            b_{\ell,m}^{u_n}(y)
            &\triangleq
            \frac{ [ p_\ell^{u_n} (y) ]^{s^\star} [ p_m^{u_n}(y) ]^{1-s^\star} }{ \int_{\calY} [ p_\ell^{u_n} (y) ]^{s^\star} [ p_m^{u_n}(y) ]^{1-s^\star} \mathrm{d}\mu_{u_n}(y) }
            \nonumber\\
            &=
            \exp\left( \bbC_{s^\star}\left( p_\ell^{u_n} \| p_m^{u_n} ; \mu_{u_n} \right)\right)
            [ p_\ell^{u_n} (y) ]^{s^\star} [ p_m^{u_n}(y) ]^{1-s^\star}
            \label{eq:univar-tilted-distri}
        \end{align}
        w.r.t. $\mu_{u_n}$. We also define an overall tilted distribution on $\calY^N$ via its density
        \begin{align*}
            b_{\ell,m}\left(y^N\right)
            \triangleq
            \prod_{n\in[N]} b_{\ell,m}^{u_n}(y_n)
        \end{align*} 
        w.r.t. $\prod_{n\in[N]}\mu_{u_n}$. Equivalently, this overall tilted distribution is a probability measure $\tilde{\bbP}_{\ell,m}$ where $$\tilde{\bbP}_{\ell,m}\left\{Y^N\in\calA\right\}\eqdef\int_\calA b_{\ell,m}\left(y^N\right)\prod_{n\in[N]}\mu_{u_n}(y_n)$$ for $\calA\in\mathscr{F}_{\calY^N}$, and we denote the associated expectation by $\tilde{\bbE}_{\ell,m}$.
        
        Since the control policy is open-loop, and now $u^N$ is fixed, we have that $(Y_n)_{n\in[N]}$ has \emph{independent} components conditioned on any hypotheses $m$, i.e., $Y^N|(U^N=u^N) \sim \prod_{n\in[N]} p_m^{u_n}(y_n)$ if $M=m$. Then we see that $b_{\ell,m}$ is the $s^\star$-tilted distribution between $\prod_{n\in[N]} p_\ell^{u_n}$ and $\prod_{n\in[N]} p_m^{u_n}$, and hence 
        \begin{align*}
            \bbD \left( b_{\ell,m} \Bigg\| \prod_{n\in[N]} p_\ell^{u_n} ; \prod_{n\in[N]}\mu_{u_n} \right)
            &=
            \bbD \left( b_{\ell,m} \Bigg\| \prod_{n\in[N]} p_m^{u_n} ; \prod_{n\in[N]}\mu_{u_n} \right)
            \\
            &=
            \bbC_{s^\star} \left( \prod_{n\in[N]} p_\ell^{u_n} \Bigg\| \prod_{n\in[N]} p_m^{u_n} ; \prod_{n\in[N]}\mu_{u_n} \right)
            ,
        \end{align*}
        or equivalently,
        \begin{align}
            \label{eq:opt-Chernoff-exponent}
            \sum_{n\in[N]} 
            \bbD \left( b_{\ell,m}^{u_n} \| p_{\ell}^{u_n} ; \mu_{u_n}
            \right)
            =
            \sum_{n\in[N]} 
            \bbD \left( b_{\ell,m}^{u_n} \| p_m^{u_n} ; \mu_{u_n}
            \right)
            =
            \sum_{n\in[N]} 
            \bbC_{s^\star} \left( p_{\ell}^{u_n} \| p_m^{u_n} ; \mu_{u_n}
            \right)
            ,
        \end{align}
        which is \cite[eq. (39)]{nitinawarat_controlled_2013}.
        
        Now we focus on the following r.v.
        \begin{align}
            S_N 
            &\eqdef
            \log 
            \left(
            \frac{\prod_{n\in[N]} b_{\ell,m}^{u_n} (Y^N)}
            {\prod_{n\in[N]} p_{m}^{u_n} (Y^N)}
            \right)
            -
            \tilde{\bbE}_{\ell,m}
            \left[
            \log 
            \left(
            \frac{\prod_{n\in[N]} b_{\ell,m}^{u_n} (Y^N)}
            {\prod_{n\in[N]} p_{m}^{u_n} (Y^N)}
            \right)
            \right]
            \nonumber\\&=
            \sum_{n\in[N]}
            \left(
            \log \left( \frac{ b_{\ell,m}^{u_n}(Y_n) }{ p_{m}^{u_n}(Y_n) } \right)
            -
            \tilde{\bbE}_{\ell,m}
            \left[
            \log \left( \frac{ b_{\ell,m}^{u_n}(Y_n) }{ p_{m}^{u_n}(Y_n) } \right)
            \right]
            \right)
            \nonumber\\&=
            \log 
            \left(
            \frac{\prod_{n\in[N]} b_{\ell,m}^{u_n} (Y^N)}
            {\prod_{n\in[N]} p_{m}^{u_n} (Y^N)}
            \right)
            -
            \sum_{n\in[N]}
            \bbD \left( b_{\ell,m}^{u_n} \| p_m^{u_n} ; \mu_{u_n} \right)
            \label{eq:sum-center-log-like}
            ,
        \end{align}
        which by definition has mean zero under $\tilde{\bbP}_{\ell,m}$. To obtain a concentration result, we further consider its variance $\widetilde{\textnormal{Var}}_{\ell,m}(S_N)$, where for any r.v. $Z$,
        \begin{align*}
            \widetilde{\textnormal{Var}}_{\ell,m}(Z)
            \eqdef \tilde{\bbE}_{\ell,m}
            \left[ \left( Z - \tilde{\bbE}_{\ell,m}\left[Z\right] \right) ^2 \right]
            .
        \end{align*}
        First, we note that by \eqref{eq:univar-tilted-distri},
        \begin{align*}
            \log \left( \frac{ b_{\ell,m}^{u_n}(y) }{ p_{m}^{u_n}(y) } \right)
            = 
            s^\star \log \left( \frac{ p_{\ell}^{u_n}(y) }{ p_{m}^{u_n}(y) } \right)
            +
            \bbC_{s^\star} \left( p_{\ell}^{u_n} \| p_m^{u_n} ; \mu_{u_n}
            \right)
            ,
        \end{align*}
        so
        \begin{align*}
            S_N &=
            s^\star
            \sum_{n\in[N]}
            \left(
            \log \left( \frac{ p_{\ell}^{u_n}(Y_n) }{ p_{m}^{u_n}(Y_n) } \right)
            -
            \tilde{\bbE}_{\ell,m}
            \left[
            \log \left( \frac{ p_{\ell}^{u_n}(Y_n) }{ p_{m}^{u_n}(Y_n) } \right)
            \right]
            \right)
            ,
        \end{align*}
        Since $(Y_n)_{n\in[N]}$ independent under $\tilde{\bbP}_{\ell,m}$, we have
        \begin{align*}
            \widetilde{\textnormal{Var}}_{\ell,m}(S_N)
            = (s^\star)^2 \sum_{n\in[N]}
            \widetilde{\textnormal{Var}}_{\ell,m}
            \left( \log \left( \frac{ p_{\ell}^{u_n}(Y_n) }{ p_{m}^{u_n}(Y_n) } \right) \right)
            .
        \end{align*}
        Explicitly calculating the log-likelihood ratios for the Poisson observation kernels gives
        \begin{align*}
            \log \left( \frac{ p_{\ell}^{u_n}(y) }{ p_{m}^{u_n}(y)} \right)
            &= y \log \left( \frac{ \lambda_{\ell}^{u_n} }{ \lambda_{m}^{u_n}} \right)
            - \left( \lambda_{\ell}^{u_n} - \lambda_{m}^{u_n} \right)
            .
        \end{align*}
        Moreover, under $\tilde{\bbP}_{\ell,m}$ we have $Y_n\sim b_{\ell,m}^{u_n}$, while a direct calculation via \eqref{eq:univar-tilted-distri} yields that $b_{\ell,m}^{u_n}$ is the density of $\textnormal{Poi}\left( (\lambda_\ell^{u_n})^{s^\star} (\lambda_m^{u_n})^{1-s^\star} \right)$ w.r.t. $\mu_{\bbN_0}$. Hence, 
        \begin{align}
            \widetilde{\textnormal{Var}}_{\ell,m}(S_N)
            &= (s^\star)^2 \sum_{n\in[N]}
            \left(
            \log \left( \frac{ \lambda_{\ell}^{u_n} }{ \lambda_{m}^{u_n}} \right) 
            \right)^2
            \widetilde{\textnormal{Var}}_{\ell,m}
            \left( Y_n \right)
            \nonumber\\
            &= (s^\star)^2 \sum_{n\in[N]}
            \left(
            \log \left( \frac{ \lambda_{\ell}^{u_n} }{ \lambda_{m}^{u_n}} \right) 
            \right)^2
            (\lambda_\ell^{u_n})^{s^\star} (\lambda_m^{u_n})^{1-s^\star}
            \nonumber\\
            &\overset{\textnormal{(a)}}{=} (s^\star)^2
            \frac{\alpha^2}{N}
            \sum_{n\in[N]}
            \left( \log \left( \frac{ \Lambda_{\ell}^{v_n} }{ \Lambda_{m}^{v_n}} \right) \right)^2
            (\Lambda_\ell^{v_n})^{s^\star} (\Lambda_m^{v_n})^{1-s^\star}
            \nonumber\\
            &= 
            \alpha^2 
            \bbE_{ V \sim \bar{q}_{v^N} }
            \left[
            (s^\star)^2
            \left( \log \left( \frac{ \Lambda_{\ell}^{V} }{ \Lambda_{m}^{V}} \right) \right)^2
            \left(\Lambda_\ell^{V}\right)^{s^\star} \left(\Lambda_m^{V}\right)^{1-s^\star}
            \right]
            \nonumber\\
            &\overset{\textnormal{(b)}}{\leq} 
            \alpha^2 
            \left( \log \left( \frac{ ( R_\textnormal{CA} + 1 )^2 + r_\textnormal{SN} }{ r_\textnormal{SN} } \right) \right)^2
            ( ( R_\textnormal{CA} + 1 )^2 + r_\textnormal{SN} )
            \label{eq:var-upp-bdd}
            ,
        \end{align}
        where in (a) we applied \eqref{eq:lambda-Lambda-conver} and in (b) we used $0\leq s^\star\leq 1$ and the bound
        \begin{align*}
            0 < r_\textnormal{SN}
            \leq \Lambda_m^{v}
            \leq ( R_\textnormal{CA} + 1 )^2 + r_\textnormal{SN}
            < \infty
        \end{align*}
        that holds for any $v\in\calD(R_\textnormal{CA})$ and any $m\in\calM$. 
        
        Fix any function $f:\bbN\to\bbR_{>0}$ satisfying $\lim_{N\to\infty} f(N)=\infty$. For any $\eta>0$, we have
        \begin{align*}
            \tilde{\bbP}_{\ell,m}
            \left\{
            \frac{|S_N|}{f(N)\alpha} \geq \eta
            \right\}
            \leq
            \frac{
            \widetilde{\textnormal{Var}}_{\ell,m}
            \left( \frac{S_N}{f(N)\alpha } \right)
            }{ \eta^2 }
            \leq
            \left(
            \frac{
            \log \left( \frac{ ( R_\textnormal{CA} + 1 )^2 + r_\textnormal{SN} }{ r_\textnormal{SN} } \right)
            \sqrt{  R_\textnormal{CA} + 1 )^2 + r_\textnormal{SN} } 
            }{ f(N) \eta }
            \right)^2
        \end{align*}
        by Chebyshev's inequality. Then, for any $\varepsilon\in(0,1/2)$, since $f(N)\to\infty$ as $N\to\infty$, there is an $N(\eta,\varepsilon,R_\textnormal{SN},R_\textnormal{CA})\in\bbN$ such that $N\geq N(\eta,\varepsilon,R_\textnormal{SN},R_\textnormal{CA})$ implies that
        \begin{align*}
            f(N) \geq
            \frac{
            \log \left( \frac{ ( R_\textnormal{CA} + 1 )^2 + r_\textnormal{SN} }{ r_\textnormal{SN} } \right)
            \sqrt{  R_\textnormal{CA} + 1 )^2 + r_\textnormal{SN} } 
            }{ \sqrt{\varepsilon} \eta }
            .
        \end{align*}
        Hence, we obtain
        \begin{align}
            \label{eq:concentration-result}
            \tilde{\bbP}_{\ell,m}
            \left\{
            \frac{|S_N|}{f(N)\alpha} \geq \eta
            \right\}
            \leq \varepsilon
            ,~\forall N \geq N(\eta,\varepsilon,R_\textnormal{SN},R_\textnormal{CA})
            .
        \end{align}
        It is important to note that the integer $N(\eta,\varepsilon,R_\textnormal{SN},R_\textnormal{CA})$ does not depend on $\alpha$.
        
        To connect this to the probability of error, consider the following two cases: In case 1, we assume
        \begin{align}
            \label{eq:case-1}
            \tilde{\bbP}_{\ell,m} 
            \left\{ 
            \delta^{(j)}( Y^N , u^N ) = \ell
            \right\}
            \geq \frac{1}{2}
            ;
        \end{align}
        in case 2, we assume
        \begin{align}
            \label{eq:case-2}
            \tilde{\bbP}_{\ell,m} 
            \left\{ 
            \delta^{(j)}( Y^N , u^N ) \neq \ell
            \right\}
            \geq \frac{1}{2}
            .
        \end{align}
        At least one of the cases must be true since $\tilde{\bbP}_{\ell,m} 
            \left\{ 
            \delta^{(j)}( Y^N , u^N ) = \ell
            \right\}+ \tilde{\bbP}_{\ell,m} 
            \left\{ 
            \delta^{(j)}( Y^N , u^N ) \neq \ell
            \right\} = 1$.
        
        Suppose we are in case 1. Then monotonicity of probability and \eqref{eq:concentration-result} gives
        \begin{align}
            \tilde{\bbP}_{\ell,m} 
            \left\{ 
            \delta^{(j)}( Y^N , u^N ) = \ell
            ,
            \frac{S_N}{f(N)\alpha} \geq \eta
            \right\}
            &\leq
            \tilde{\bbP}_{\ell,m} 
            \left\{ 
            \frac{S_N}{f(N)\alpha} \geq \eta
            \right\}
            \leq
            \tilde{\bbP}_{\ell,m} 
            \left\{ 
            \frac{|S_N|}{f(N)\alpha} \geq \eta
            \right\}
            \leq \varepsilon
            ,
            \label{eq:case-1-concen-result}
        \end{align}
        and hence through a change of measure argument we have
        \begin{align*}
            \frac{1}{2} - \varepsilon
            &\overset{\textnormal{(a)}}{\leq}
            \frac{1}{2} - \tilde{\bbP}_{\ell,m} 
            \left\{ 
            \delta^{(j)}( Y^N , u^N ) = \ell
            ,
            \frac{S_N}{f(N)\alpha} \geq \eta
            \right\}
            \\
            &\overset{\textnormal{(b)}}{\leq}
            \tilde{\bbP}_{\ell,m} 
            \left\{ 
            \delta^{(j)}( Y^N , u^N ) = \ell
            ,
            \frac{S_N}{f(N)\alpha} < \eta
            \right\}
            \\
            &\overset{\textnormal{(c)}}{=}
            \tilde{\bbP}_{\ell,m} 
            \left\{ 
            \delta^{(j)}( Y^N , u^N ) = \ell
            ,
            \log 
            \left(
            \frac{\prod_{n\in[N]} b_{\ell,m}^{u_n} (Y^N)}
            {\prod_{n\in[N]} p_{m}^{u_n} (Y^N)}
            \right)
            <
            \sum_{n\in[N]}
            \bbD \left( b_{\ell,m}^{u_n} \| p_m^{u_n} ; \mu_{u_n} \right)
            +
            f(N) \alpha \eta
            \right\}
            \\&=
            \int_{\calY^N}
            \left( \prod_{n\in[N]} b_{\ell,m}^{u_n} (y^N) \mathrm{d}\mu_{u_n}(y_n) \right)
            \bbI
            \Bigg\{ 
            \delta^{(j)}( y^N , u^N ) = \ell
            ,
            \prod_{n\in[N]} b_{\ell,m}^{u_n} (y^N)
            <
            \\
            &\qquad\qquad\qquad\qquad\qquad\qquad\quad
            \prod_{n\in[N]} p_{m}^{u_n} (y^N)
            \exp\left(
            \sum_{n\in[N]}
            \bbD \left( b_{\ell,m}^{u_n} \| p_m^{u_n} ; \mu_{u_n} \right)
            +
            f(N) \alpha \eta
            \right)
            \Bigg\}
            \\&\overset{\textnormal{(d)}}{\leq}
            \exp\left(
            \sum_{n\in[N]}
            \bbD \left( b_{\ell,m}^{u_n} \| p_m^{u_n} ; \mu_{u_n} \right)
            +
            f(N) \alpha \eta
            \right)
            \int_{\calY^N}
            \left( \prod_{n\in[N]} p_{m}^{u_n} (y^N) \mathrm{d}\mu_{u_n}(y_n) \right)
            \\
            &\qquad
            \bbI
            \Bigg\{ 
            \delta^{(j)}( y^N , u^N ) = \ell
            ,
            \prod_{n\in[N]} b_{\ell,m}^{u_n} (y^N)
            <
            \prod_{n\in[N]} p_{m}^{u_n} (y^N)
            \exp\left(
            \sum_{n\in[N]}
            \bbD \left( b_{\ell,m}^{u_n} \| p_m^{u_n} ; \mu_{u_n} \right)
            +
            f(N) \alpha \eta
            \right)
            \Bigg\}
            \\&\overset{\textnormal{(e)}}{\leq}
            \exp\left(
            \sum_{n\in[N]}
            \bbD \left( b_{\ell,m}^{u_n} \| p_m^{u_n} ; \mu_{u_n} \right)
            +
            f(N) \alpha \eta
            \right)
            \bbP_m \left\{
            \delta^{(j)}( Y^N , U^N ) = \ell
            ~\bigg\vert~ U^N = u^N
            \right\}
            \\&\overset{\textnormal{(f)}}{\leq}
            \exp\left(
            \sum_{n\in[N]}
            \bbD \left( b_{\ell,m}^{u_n} \| p_m^{u_n} ; \mu_{u_n} \right)
            +
            f(N) \alpha \eta
            \right)
            \bbP_m \left\{
            \delta^{(j)}( Y^N , U^N ) \neq m
            ~\bigg\vert~ U^N = u^N
            \right\}
            ,
        \end{align*}
        where (a) follows from \eqref{eq:case-1-concen-result}, (b) follows from \eqref{eq:case-1}, (c) follows from \eqref{eq:sum-center-log-like}, (d) is the change of measure step, and (e), (f) follows from the definition in \eqref{eq:def-joint-obser-contr-meas} and the monotonicity of probability. In summary, we obtain
        \begin{align}
            \label{eq:case-1-prob-low-bdd}
            \bbP_m \left\{
            \delta^{(j)}( Y^N , U^N ) \neq m
            ~\bigg\vert~ U^N = u^N
            \right\}
            &\geq
            \left( \frac{1}{2} - \varepsilon \right)
            \exp\left(
            -
            \sum_{n\in[N]}
            \bbD \left( b_{\ell,m}^{u_n} \| p_m^{u_n} ; \mu_{u_n} \right)
            -
            f(N) \alpha \eta
            \right)
        \end{align}
        in case 1.

        In case 2., by considering instead $\tilde{\bbP}_{\ell,m}$-centered log-likelihoods of $b_{u,m}$ versus $p_\ell$, a similar argument yields
        \begin{align}
            \label{eq:case-2-prob-low-bdd}
            \bbP_\ell \left\{
            \delta^{(j)}( Y^N , U^N ) \neq \ell
            ~\bigg\vert~ U^N = u^N
            \right\}
            &\geq
            \left( \frac{1}{2} - \varepsilon \right)
            \exp\left(
            -
            \sum_{n\in[N]}
            \bbD \left( b_{\ell,m}^{u_n} \| p_\ell^{u_n} ; \mu_{u_n} \right)
            -
            f(N) \alpha \eta
            \right)
            .
        \end{align}

        Since at least one of the cases must happen, we have, in all cases,
        \begin{align}
            &
            \max \left(
            \bbP_m \left\{
            \delta^{(j)}( Y^N , U^N ) \neq m
            ~\bigg\vert~ U^N = u^N
            \right\}
            ,
            \bbP_\ell \left\{
            \delta^{(j)}( Y^N , U^N ) \neq \ell
            ~\bigg\vert~ U^N = u^N
            \right\}
            \right)
            \nonumber \\
            \overset{\textnormal{(a)}}{\geq}~&
            \left( \frac{1}{2} - \varepsilon \right)
            \exp\left(
            -
            \sum_{n\in[N]}
            \bbC_{s^\star} \left( p_{\ell}^{u_n} \| p_m^{u_n} ; \mu_{\bbN_0}
            \right)
            -
            f(N) \alpha \eta
            \right)
            \nonumber \\
            \overset{\textnormal{(b)}}{=}~&
            \left( \frac{1}{2} - \varepsilon \right)
            \exp\left(
            -
            \max_{s\in[0,1]}
            \sum_{n\in[N]}
            \bbC_{s} \left( p_{\ell}^{u_n} \| P_m^{u_n} ; \mu_{\bbN_0}
            \right)
            -
            f(N) \alpha \eta
            \right)
            \nonumber \\
            \overset{\textnormal{(c)}}{=}~&
            \left( \frac{1}{2} - \varepsilon \right)
            \exp\left(
            -
            \frac{\alpha^2}{N}
            \max_{s\in[0,1]}
            \sum_{n\in[N]}
            \bbC_{s} \left( P_{\ell}^{v_n} \| P_m^{v_n} ; \mu_{\bbN_0}
            \right)
            -
            f(N) \alpha \eta
            \right)
            ,
            \label{eq:sym-err-low-bdd}
        \end{align}
        where (a) follows from \eqref{eq:opt-Chernoff-exponent}, \eqref{eq:case-1-prob-low-bdd} and \eqref{eq:case-2-prob-low-bdd}, (b) follows from the definition of $s^\star$, and (c) follows from \eqref{eq:p-to-P-conversion}.
        
        Recalling that at the start we sample $U^N=u^N$ from $q^{(j)}$, from \eqref{eq:sym-err-low-bdd} we get
        \begin{align}
            &
            \max \left(
            \bbP_m \left\{
            \delta^{(j)}( Y^N , U^N ) \neq m
            \right\}
            ,
            \bbP_\ell \left\{
            \delta^{(j)}( Y^N , U^N ) \neq \ell
            \right\}
            \right)
            \nonumber \\
            \geq~&
            \left( \frac{1}{2} - \varepsilon \right)
            \bbE_{ V^N \sim \tilde{q}^{(j)} }
            \left[
            \exp\left(
            -
            \frac{\alpha^2}{N}
            \max_{s\in[0,1]}
            \sum_{n\in[N]}
            \bbC_{s} \left( P_{\ell}^{V_n} \| P_m^{V_n} ; \mu_{\bbN_0}
            \right)
            -
            f(N) \alpha \eta
            \right)
            \right]
            \nonumber \\
            \overset{\textnormal{(a)}}{\geq}~&
            \left( \frac{1}{2} - \varepsilon \right)
            \exp\left(
            -
            \frac{\alpha^2}{N}
            \bbE_{ V^N \sim \tilde{q}^{(j)} }
            \left[
            \max_{s\in[0,1]}
            \sum_{n\in[N]}
            \bbC_{s} \left( P_{\ell}^{V_n} \| P_m^{V_n} ; \mu_{\bbN_0}
            \right)
            \right]
            -
            f(N) \alpha \eta
            \right)
            \label{eq:sym-err-low-bdd-uncond}
            ,
        \end{align}
        where $\tilde{q}^{(j)}(v^N)\eqdef q^{(j)}(\alpha v^N)$ and (a) follows from Jensen's inequality (i.e., the convexity of $\exp(\cdot)$).
        
        Then, the probability of error is
        \begin{align*}
            &
            P_\textnormal{e}
            \left(
            \left\lbrace p_m^u \left[ \alpha , \frac{\alpha^2}{R_\textnormal{SN}} , \frac{1}{N} \right]
            \right\rbrace^{u\in\calU_{\alpha,K}}_{m\in\mathcal{M}}
            , ( q^{(j)} , \delta^{(j)} )
            \right)
            \eqdef
            \frac{1}{|\calM|} \sum_{k\in\calM} 
            \bbP_k \left\{ 
            \delta^{(j)} \left( Y^N , U^N \right) \neq k \right\}
            \\\geq~&
            \frac{1}{|\calM|} \max_{k\in\calM} 
            \bbP_k \left\{ 
            \delta^{(j)} \left( Y^N , U^N \right) \neq k \right\}
            \\\geq~&
            \frac{1}{|\calM|} 
            \max \left(
            \bbP_m \left\{
            \delta^{(j)}( Y^N , U^N ) \neq m
            \right\}
            ,
            \bbP_\ell \left\{
            \delta^{(j)}( Y^N , U^N ) \neq \ell
            \right\}
            \right)
            \\
            \overset{\textnormal{(a)}}{\geq}~&
            \frac{ \frac{1}{2} - \varepsilon }{ |\calM| } 
            \exp\left(
            -
            \frac{\alpha^2}{N}
            \bbE_{ V^N \sim \tilde{q}^{(j)} }
            \left[
            \max_{s\in[0,1]}
            \sum_{n\in[N]}
            \bbC_{s} \left( P_{\ell}^{V_n} \| P_m^{V_n} ; \mu_{\bbN_0}
            \right)
            \right]
            -
            f(N) \alpha \eta
            \right)
        \end{align*}
        where (a) follows from \eqref{eq:sym-err-low-bdd-uncond}. Taking logarithm on both sides then dividing by $\alpha^2$, we obtain
        \begin{align}
            \begin{split}
            \label{eq:converse_last_step}
            &
            -\frac{1}{\alpha^2}
            \log
            P_\textnormal{e}
            \left(
            \left\lbrace p_m^u \left[ \alpha , \frac{\alpha^2}{R_\textnormal{SN}} , \frac{1}{N} \right]
            \right\rbrace^{u\in\calU_{\alpha,K}}_{m\in\mathcal{M}}
            , ( q^{(j)} , \delta^{(j)} )
            \right)
            \\
            \leq~&
            - \frac{ \log( \frac{1}{2} - \varepsilon ) - \log |\calM| }{ \alpha^2 } 
            +
            \bbE_{ V^N \sim \tilde{q}^{(j)} }
            \left[
            \max_{s\in[0,1]}
            \frac{1}{N}
            \sum_{n\in[N]}
            \bbC_{s} \left( P_{\ell}^{V_n} \| P_m^{V_n} ; \mu_{\bbN_0}
            \right)
            \right]
            +
            \frac{f(N)\eta}{\alpha} 
            .
            \end{split}
        \end{align}
       According to \cite{nitinawarat_controlled_2013}, we can restrict ourselves to \emph{pure} open-loop control policies. Thus, without loss of generality, we can assume $\tilde{q}^{(j)}$ is pure, namely, there is a deterministic sequence $v^N_{(j)}$ such that $\tilde{q}^{(j)}(v^N)=\bbI[v^N=v^N_{(j)}]$. Hence \eqref{eq:converse_last_step} becomes
       \begin{align*}
            &
            -\frac{1}{\alpha^2}
            \log
            P_\textnormal{e}
            \left(
            \left\lbrace p_m^u \left[ \alpha , \frac{\alpha^2}{R_\textnormal{SN}} , \frac{1}{N} \right]
            \right\rbrace^{u\in\calU_{\alpha,K}}_{m\in\mathcal{M}}
            , ( q^{(j)} , \delta^{(j)} )
            \right)
            \\
            \leq~&
            - \frac{ \log( \frac{1}{2} - \varepsilon ) - \log |\calM| }{ \alpha^2 } 
            +
            \max_{s\in[0,1]}
            \sum_{v\in\calV_K}
            \bar{q}_{ v^N_{(j)} }(v)
            \bbC_{s} \left( P_{\ell}^{v} \| P_m^{v} ; \mu_{\bbN_0}
            \right)
            +
            \frac{f(N)\eta}{\alpha} 
            ,
        \end{align*}
        where $\bar{q}_{ v^N_{(j)} }$ is the type of $v^N_{(j)} $. Since the inequality holds for any $(\ell,m)$, we have
        \begin{align*}
            &
            -\frac{1}{\alpha^2}
            \log
            P_\textnormal{e}
            \left(
            \left\lbrace p_m^u \left[ \alpha , \frac{\alpha^2}{R_\textnormal{SN}} , \frac{1}{N} \right]
            \right\rbrace^{u\in\calU_{\alpha,K}}_{m\in\mathcal{M}}
            , ( q^{(j)} , \delta^{(j)} )
            \right)
            \\
            \leq~&
            - \frac{ \log( \frac{1}{2} - \varepsilon ) - \log |\calM| }{ \alpha^2 } 
            +
            \min_{\substack{(\ell,m)\in\calM^2\\\ell<m}}
            \max_{s\in[0,1]}
            \sum_{v\in\calV_K}
            \bar{q}_{ v^N_{(j)} }(v)
            \bbC_{s} \left( P_{\ell}^{v} \| P_m^{v} ; \mu_{\bbN_0}
            \right)
            +
            \frac{f(N)\eta}{\alpha} 
            .
        \end{align*}
        Then, since $\bar{q}_{ v^N_{(j)} }\in\calQ(R_\textnormal{CA},R_\textnormal{CE})$, we further have
        \begin{align*}
            &
            -\frac{1}{\alpha^2}
            \log
            P_\textnormal{e}
            \left(
            \left\lbrace p_m^u \left[ \alpha , \frac{\alpha^2}{R_\textnormal{SN}} , \frac{1}{N} \right]
            \right\rbrace^{u\in\calU_{\alpha,K}}_{m\in\mathcal{M}}
            , ( q^{(j)} , \delta^{(j)} )
            \right)
            \\
            \leq~&
            - \frac{ \log( \frac{1}{2} - \varepsilon ) - \log |\calM| }{ \alpha^2 } 
            +
            \sup_{Q\in\calQ(R_\textnormal{CA},R_\textnormal{CE})}
            \min_{\substack{(\ell,m)\in\calM^2\\\ell<m}}
            \max_{s\in[0,1]}
            \bbE_{V\sim Q}
            \left[
            \bbC_{s} \left( P_{\ell}^{V} \| P_m^{V} ; \mu_{\bbN_0}
            \right)
            \right]
            +
            \frac{f(N)\eta}{\alpha} 
            .
        \end{align*}
        Taking $j\to\infty$, then taking $\alpha\to\infty$, and finally taking $(K,N)\to(\infty,\infty)$ yields the desired upper bound:
        \begin{align*}
            \beta_\textnormal{OL}
            \leq
            \sup_{Q\in\calQ(R_\textnormal{CA},R_\textnormal{CE})}
            \min_{\substack{(\ell,m)\in\calM^2\\\ell<m}}
            \max_{s\in[0,1]}
            \bbE_{V\sim Q}
            \left[
            \bbC_{s} \left( P_{\ell}^{V} \| P_m^{V} ; \mu_{\bbN_0}
            \right)
            \right]
            .
        \end{align*}

\section{Proof of Proposition~\ref{prop:policy}}
\label{appen-proof-prop}

    Since the hypotheses are binary, the exponent
    \begin{align}
        \begin{split} \label{eq:opt_prob_exponent}
        \beta_\textnormal{OL}
        &=
        \sup_{Q\in\calQ(R_\textnormal{CA},R_\textnormal{CE})}
        \sup_{s\in[0,1]}
        \bbE_{V\sim Q}
        \left[
        \bbC_{s} \left( P_{0}^{V} \| P_1^{V} ; \mu_{\bbN_0}
        \right)
        \right]
        \\&=
        \sup_{s\in[0,1]}
        \sup_{Q\in\calQ(R_\textnormal{CA},R_\textnormal{CE})}
        \bbE_{V\sim Q}
        \left[
        \bbC_{s} \left( P_{0}^{V} \| P_1^{V} ; \mu_{\bbN_0}
        \right)
        \right]
        ,
        \end{split}
    \end{align}
    where the exchange of supremums is justified by the uniform boundedness of $\bbC_{s} \left( P_{0}^{v} \| P_1^{v} ; \mu_{\bbN_0}
    \right)$ w.r.t. $(s,v)$ shown in \eqref{eq:Cher_div_unif_bdd}. 
    
    Since $R_\textnormal{CA}=1$ and the symmetry of the BPSK constellation $\{\ket{\alpha},\ket{-\alpha}\}$ with $\alpha>0$, without loss of generality (see, e.g., \cite{chung_capacity_2017}) we can further restrict the support of $Q$ to $[0,\alpha]\subset \bbR$.

    \textbf{Claim 1:} For $s\in(0,1/2]$, $\mathbb{C}_s\left(P_0^v\| P_1^v;\mu_{\bbN_0}\right)$ is strictly convex w.r.t. the energy function $E = E(u) = E_0 \left(u/\alpha\right)^2 = E_0 v^2 = E(v) $ (where $E_0 \triangleq \alpha^2 = n_\text{s}$ is the energy of generating Kennedy displacement) on $u\in[0, \alpha]$ (i.e., $v\in [0,1]$), when $r_\textnormal{SN} \ll 1$.

    \textbf{Proof 1:}
One can directly compute and get the Chernoff $s$-divergence between $P_0^v$ and $P_1^v$ is
$$
\mathbb{C}_s\left(P_0^v \| P_1^v;\mu_{\bbN_0}\right)=s \Lambda_0(v)+(1-s) \Lambda_1(v)-\Lambda_0(v)^s \Lambda_1(v)^{1-s}.
$$

Fix any $s\in(0,1/2]$. Observe that we can write $$\mathbb{C}_s\left(P_0^v \| P_1^v;\mu_{\bbN_0}\right)=f_s(\Lambda_0(v),\Lambda_1(v))$$ where we define $f_s:(0,\infty)\times(0,\infty)\to[0,\infty)$ by $f_s(\Lambda_0,\Lambda_1)=s\Lambda_0+(1-s)\Lambda_1-\Lambda_0^s \Lambda_1^{1-s}$.  We can justify the co-domain by the following observation: $$f_s(\Lambda_0,\Lambda_1)=s e^{\log \Lambda_0}+(1-s) e^{\log \Lambda_1}-e^{s \log \Lambda_0+(1-s) \log \Lambda_1}\geq 0$$ from convexity of $\exp(\cdot)$ and $s\in(0,1)$; or, equivalently, that $f_s(\Lambda_0,\Lambda_1)=\mathbb{C}_s(\text{Poi}(\Lambda_0)\|\text{Poi}(\Lambda_1))\geq 0$ by the non-negativity of divergences. Since $E(v)$ is strictly increasing w.r.t. $v\in[0,1]$, we can do a change of variable and re-write 
$$
\begin{aligned}
& \Lambda_0 = \Lambda_0(E)=\left(1-\sqrt{\frac{E}{E_0}}\right)^2+r_\textnormal{SN},\\
& \Lambda_1 = \Lambda_1(E)=\left(1+\sqrt{\frac{E}{E_0}}\right)^2+r_\textnormal{SN},
\end{aligned}
$$
and hence the Chernoff $s$-divergence 
$$
\mathbb{C}_s\left(P_0^v \| P_1^v;\mu_{\bbN_0}\right)
=
f_s(\Lambda_0(E),\Lambda_1(E))
$$ 
can be viewed as a function of $E\in [0,E_0]$. An equivalent condition for its strict convexity w.r.t. $E\in\left[0, E_0\right]$ is that 
$$
\frac{\partial^2}{\partial E^2} f_s(\Lambda_0(E),\Lambda_1(E)) 
> 
0
\text { for } E \in ( 0, E_0 ).
$$ 

To examine this equivalent condition, we first compute the first-order derivative:
$$
\begin{aligned}
\frac{\partial}{\partial E} f_s\left(\Lambda_0(E), \Lambda_1(E)\right)
=
\frac{\partial f_s}{\partial \Lambda_0} \cdot \frac{\mathrm{d} \Lambda_0}{\mathrm{d} E}
+
\frac{\partial f_s}{\partial \Lambda_1} \cdot \frac{\mathrm{d} \Lambda_1}{\mathrm{d} E}
,
\end{aligned} 
$$
and then the second derivative  
$$
\begin{aligned}
 \frac{\partial^2}{\partial E^2} f_s\left(\Lambda_0(E), \Lambda_1(E)\right)
 &=
 \left[
 \frac{\partial^2 f_s}{\partial \Lambda_0^2} \cdot \left(\frac{\mathrm{d} \Lambda_0}{\mathrm{d} E}\right)^2
 +
 2
 \frac{\partial^2 f_s}{\partial \Lambda_1 \partial \Lambda_0} \cdot \frac{\mathrm{d} \Lambda_1}{\mathrm{d} E} \cdot \frac{\mathrm{d} \Lambda_0}{\mathrm{d} E}
 +
 \frac{\partial^2 f_s}{\partial \Lambda_1^2} \cdot\left(\frac{\mathrm{d} \Lambda_1}{\mathrm{d} E}\right)^2
 \right]
 \\&\quad
 +
 \left[
 \frac{\partial f_s}{\partial \Lambda_0} \cdot \frac{\mathrm{d}^2 \Lambda_0}{\mathrm{d} E^2}
 +
 \frac{\partial f_s}{\partial \Lambda_1} \cdot\frac{\mathrm{d}^2 \Lambda_1}{\mathrm{d} E^2}
 \right]
 .
\end{aligned}
$$
To explicitly compute these derivatives, we further compute
$$
\begin{aligned}
& \frac{\partial f_s}{\partial \Lambda_0}
= s \left( 1 - \Lambda_0^{s-1} \Lambda_1^{1-s} \right)
,\\
&  \frac{\partial f_s}{\partial \Lambda_1}
= (1-s) \left( 1 - \Lambda_0^{s} \Lambda_1^{-s} \right)
,\end{aligned}
$$
and hence
$$
\begin{aligned}
\frac{\partial^2 f_s}{\partial \Lambda_0^2}
&=
s(1-s) \Lambda_0^{s-2} \Lambda_1^{1-s}
,\\
\frac{\partial^2 f_s}{\partial \Lambda_1^2}
&=
s(1-s) \Lambda_0^{s} \Lambda_1^{-1-s}
,\\
\frac{\partial^2 f_s}{\partial \Lambda_0 \partial \Lambda_1}
&=
-s(1-s) \Lambda_0^{s-1} \Lambda_1^{-s}
.
\end{aligned}
$$
Also, we have the derivatives of the rates $\Lambda_0(E)$ and $\Lambda_1(E)$ w.r.t. $E$ as: 
$$
\begin{aligned}
&\frac{\mathrm{d}\Lambda_0}{\mathrm{d}E}
= \frac{1}{E_0} \left( 1 - \sqrt{\frac{E_0}{E} }\right) ;
&\frac{\mathrm{d}\Lambda_1}{\mathrm{d}E}
= \frac{1}{E_0} \left( 1 + \sqrt{\frac{E_0}{E} }\right) ;
\\
&\frac{\mathrm{d}^2 \Lambda_0}{\mathrm{d} E^2 }
= \frac{\mathrm{d}^2 \Lambda_1}{\mathrm{d} E^2}
= \frac{1}{2\sqrt{E_0 E^3}}
,
\end{aligned}
$$
and hence
$$
\begin{aligned}
\left(\frac{\mathrm{d} \Lambda_0}{\mathrm{d} E}\right)^2
&= \frac{1}{E_0 E} \left( \sqrt{\frac{E}{E_0}}-1\right)^2
= \frac{1}{E_0 E} \left(\Lambda_0(E) -r_\textnormal{SN}\right) ;
\\
\left(\frac{\mathrm{d} \Lambda_1}{\mathrm{d} E}\right)^2
&= \frac{1}{E_0 E} \left( \sqrt{\frac{E}{E_0}}+1\right)^2
= \frac{1}{E_0 E} \left(\Lambda_1(E) -r_\textnormal{SN}\right) ;
\\
\left(\frac{\mathrm{d} \Lambda_0}{\mathrm{d} E}\right) \left(\frac{\mathrm{d} \Lambda_1}{\mathrm{d} E}\right)
&=\frac{1}{E_0 E} \left( \frac{E}{E_0} - 1 \right)
=\frac{1}{E_0 E}\left(\frac{\Lambda_0(E)+\Lambda_1(E)}{2}-2-r_\textnormal{SN}\right).
\end{aligned}
$$
Therefore, we can compute
$$
\begin{aligned}
&
\frac{\partial^2}{\partial E^2} f_s(\Lambda_0(E),\Lambda_1(E))
\\
&=
\frac{s(1-s)}{E_0 E}
\left[
 \Lambda_0^{s-2} \Lambda_1^{1-s}
 \left(\Lambda_0-r_\textnormal{SN}\right)
- 2 \Lambda_0^{s-1} \Lambda_1^{-s} \left(\frac{\Lambda_0+\Lambda_1}{2}-2-r_\textnormal{SN}\right)
+ \Lambda_0^s \Lambda_1^{-1-s} \left(\Lambda_1-r_\textnormal{SN}\right)
\right]
\\
&
\quad+
\frac{1}{2\sqrt{E_0 E^3}}
\left[
s\left(1-\Lambda_0^{s-1} \Lambda_1^{1-s}\right)+
(1-s)\left(1-\Lambda_0^s \Lambda_1^{-s}\right)
\right]
\\
&=
\frac{s(1-s)}{E_0 E}
\Lambda_0^{s} \Lambda_1^{1-s} \left[
4 \Lambda_0^{-1} \Lambda_1^{-1}
- r_\textnormal{SN}
\left(
\Lambda_0^{-2}
- 2 \Lambda_0^{-1} \Lambda_1^{-1}
+ \Lambda_1^{-2}
\right)
\right]
+
\frac{1}{2\sqrt{E_0 E^3}}
\left[
1 - \Lambda_0^s \Lambda_1^{1-s}
\left(
s \Lambda_0^{-1}
+(1-s) \Lambda_1^{-1}
\right)
\right]
\end{aligned}
$$
after canceling a few terms. Further collecting terms and simplifying yields
$$
\begin{aligned}
&
\frac{\partial^2}{\partial E^2} f_s(\Lambda_0(E),\Lambda_1(E))
\\
&=
\frac{ \Lambda_0^{s-1} \Lambda_1^{-s}  }{ 2 E_0 E \sqrt{E} }
\left[
2 \sqrt{E} s(1-s)
\left(
4
-
r_\textnormal{SN}
\Lambda_0 \Lambda_1
\left(
\Lambda_0^{-1}
- \Lambda_1^{-1}
\right)^2
\right)
+
\sqrt{E_0} 
\left( 
\Lambda_0^{1-s} \Lambda_1^s - s \Lambda_1 - (1-s) \Lambda_0
\right)
\right]
\\
&=
\frac{ \Lambda_0^{s-1} \Lambda_1^{-s}  }{ 2 \sqrt{E_0 E^3} }
\left[
8 \sqrt{\frac{E}{E_0}} s(1-s)
\left(
1
-
r_\textnormal{SN}
\frac{
\left(
\Lambda_1
- \Lambda_0
\right)^2
}{ 4 \Lambda_0 \Lambda_1}
\right)
-
\mathbb{C}_{1-s}
\left(
P_0^v \| P_1^v ; \mu_{\bbN_0}
\right)
\right]
\end{aligned}
$$
where we have used $ \mathbb{C}_{1-s}\left( P_0^v \| P_1^v ; \mu_{\bbN_0} \right) = (1-s) \Lambda_0 + s \Lambda_1 - \Lambda_0^{1-s} \Lambda_1^{s}  = \mathbb{C}_s\left( P_1^v \| P_0^v ; \mu_{\bbN_0} \right)$. Notice that
$$
\begin{aligned}
r_\textnormal{SN}
\frac{
\left(
\Lambda_1
- \Lambda_0
\right)^2
}{ 4 \Lambda_0 \Lambda_1}
&=
\frac{
\left( 4 \sqrt{E/E_0} \right)^2
r_\textnormal{SN}
}{
4
\left[\left(1-\sqrt{E/E_0}\right)^2 + r_\textnormal{SN} \right]
\left[\left(1+\sqrt{E/E_0}\right)^2
+ r_\textnormal{SN}\right]
}
\\
&=
\frac{
r_\textnormal{SN}
}{
\left(1-\sqrt{E/E_0}\right)^2 + r_\textnormal{SN}
}
\frac{
\left( 2 \sqrt{E/E_0} \right)^2
}{
\left(1+\sqrt{E/E_0}\right)^2 + r_\textnormal{SN}
}
< 1
\end{aligned}
$$ since $r_\textnormal{SN}>0$ and $E/E_0\in(0,1)$. Therefore, to show $\frac{\partial^2}{\partial E^2} f_s(\Lambda_0(E),\Lambda_1(E))> 0$ for $E\in(0,E_0)$, it suffices to show that for this range of $E$ we have
$$
\begin{aligned}
8 \sqrt{\frac{E}{E_0}} s(1-s)
\left(
1
-
r_\textnormal{SN}
\frac{
\left(
\Lambda_1
- \Lambda_0
\right)^2
}{ 4 \Lambda_0 \Lambda_1}
\right)
-
\mathbb{C}_{1-s}
\left(
P_0^v \| P_1^v ; \mu_{\bbN_0}
\right)    
>
0
.
\end{aligned}
$$
Rewriting the left-hand side as a function of $v = \sqrt{E/E_0}\in(0,1),$ we obtain
$$
\begin{aligned}
g_{s,r}(v) 
&\triangleq
8s(1-s)
v
\left(1-
\frac{r}{(1-v)^2+r}
\frac{4 v^2}{(1+v)^2+r}
\right)
\\&\quad
-
\left\lbrace
(1-s)[(1-v)^2+r]
+
s[(1+v)^2+r]
-
[(1-v)^2+r]^{1-s}
[(1+v)^2+r]^{s}
\right\rbrace
\end{aligned}
$$
where we have substituted $r \eqdef r_\textnormal{SN} >0$ for brevity.

Let us examine the regime where $r_\textnormal{SN} = r \ll 1$. Taking $r\to 0^+$ in $g_{s,r}(v)$, we get
$$
\begin{aligned}
    g_{s,0}(v)
    &\triangleq
    8s(1-s)v
    -
    [
    (1-s)(1-v)^2
    +
    s(1+v)^2
    -
    (1-v)^{2(1-s)}
    (1+v)^{2s}
    ]
    \\&=
    8 s v -8 s^2 v
    - 
    [
    ( 1 - v )^2 + 4 s v
    - \left( (1-v)^2 \right)^{1-s}
    \left( (1+v)^2 \right)^{s}
    ]
    \\&=
    4 s (1 - 2s) v
    + (1-v)^{2(1-s)}
    [
    (1+v)^{2s}
    -
    (1-v)^{2s}
    ]
    >
    0
\end{aligned}
$$
for $s\in(0,1/2]$ and $v\in(0,1)$. This implies that when the effect of dark count is minimal compared to that of the coherent state separation, and when $0<s\leq 1/2$, the Chernoff $s$-divergence $\mathbb{C}_s(P_0^v\|P_1^v;\mu_{\bbN_0})$ is \emph{strictly convex} w.r.t. the cost $E(v)$ for $v\in[0,1]$.

Since the mapping $w:[0,1]\to [0,1]:v\mapsto v^2$ is strictly increasing, by a change of variable $w=v^2$, the optimization problem 
\begin{align}
    \label{eq:opt_prob}
    \sup_{Q\in\mathcal{Q}(R_\textnormal{CA},R_\textnormal{CE})} \bbE_{V\sim Q} \left[\mathbb{C}_s(P_0^V\|P_1^V;\mu_{\bbN_0}) \right]
\end{align}
with $R_\textnormal{CA}=1$ and $R_\textnormal{CE}\leq 1$ can be recast as the equivalent optimization problem
\begin{align}
    \label{eq:equiv_opt_prob}
    \sup_{ Q' \in \mathcal{Q}'(R_\textnormal{CE})} \bbE_{W\sim Q'} \left[
    f_s\left(\Lambda_0(E_0 W), \Lambda_1(E_0 W)\right)
    \right]
\end{align}
where 
\begin{align*}
    \mathcal{Q}'(R_\textnormal{CE}) 
    &\eqdef
    \left\{
    Q': \mathscr{B}_{ [0,1] } \to [0,1]
    ~\bigg\vert~
    Q'( [0,1] )
    =
    1
    ,
    \bbE_{ W \sim Q' }
    \left[
    W
    \right]
    \leq 
    R_\textnormal{CE}
    \right\}
    .
\end{align*}
Consider the function $\tilde{f}_s : [0,1]\to \bbR_{\geq 0}: w\mapsto f_s\left(\Lambda_0(E_0 w), \Lambda_1(E_0 w)\right)$. Similar to the reasoning above, it is strictly convex; it is non-negative, being a divergence. Also, $\tilde{f}_s(0)=0$. Hence it is also strictly increasing on $w\in[0,1]$. Therefore, the maximizing distribution to \eqref{eq:equiv_opt_prob} is the time-sharing policy $Q^{'\star}=Q^{'\star}_s$ where
$$
\begin{aligned}
Q^{'\star}_s (w) = R_\textnormal{CE} \mathbb{I}[ w = 1 ] + \left( 1-R_\textnormal{CE} \right) \mathbb{I}[ w = 0 ]
,
\end{aligned}
$$
corresponding to the maximizing distribution to \eqref{eq:opt_prob} as the time-sharing policy $Q^{\star}=Q^{\star}_s$ where
$$
\begin{aligned}
Q^{\star}_s (v) = R_\textnormal{CE} \mathbb{I}[ v = 1 ] + \left( 1-R_\textnormal{CE} \right) \mathbb{I}[ v = 0 ]
.
\end{aligned}
$$
The proof for \textbf{Claim 1} is thus complete.

\textbf{Claim 2:} Fix any $Q\in\calQ(R_\textnormal{CA},R_\textnormal{CE})\setminus\{Q_0\}$ where $Q_0(v)\eqdef\bbI[v=0]$. Let 
\begin{align*}
    s^\star(Q) \eqdef \argmax_{s\in[0,1]} \bbE_{V\sim Q} \left[ \mathbb{C}_s(P_0^V\|P_1^V;\mu_{\bbN_0}) \right]
\end{align*}
Then $s^\star(Q) \in (0,1/2]$.

\textbf{Proof 2:} Recall the Chernoff $s$-divergence
\begin{align*}
    \mathbb{C}_s(P_0^v\|P_1^v;\mu_{\bbN_0})
    = s \Lambda_0 + (1-s) \Lambda_1
    - \Lambda_0^s \Lambda_1^{1-s}
    ,
\end{align*}
where $\Lambda_0=\Lambda_0(v)$ and $\Lambda_1=\Lambda_1(v)$. According to~\cite{nielsen_chernoff_2013,nielsen_information-geometric_2013}, this divergence is \emph{strictly concave} in $s\in[0,1]$ as long as $P_0^v\neq P_1^v \Leftrightarrow \Lambda_0(v)\neq \Lambda_1(v) \Leftrightarrow v\neq 0$. Therefore, for any fixed $v\neq 0$, the divergence has a unique maximum $s^\star(v) \eqdef \argmax_{s\in [0,1]} \mathbb{C}_s(P_0^v\|P_1^v;\mu_{\bbN_0})$, and at this maximum, the derivative of $\mathbb{C}_s(P_0^v\|P_1^v;\mu_{\bbN_0})$ w.r.t. $s$ is zero. Explicitly calculating this derivative gives
\begin{align*}
\frac{\partial}{\partial s}
\mathbb{C}_s(P_0^v\|P_1^v;\mu_{\bbN_0})
=
\Lambda_0 - \Lambda_1
- \Lambda_0^s \Lambda_1^{1-s} \log \left( \frac{ \Lambda_0 }{ \Lambda_1 }
\right)
.
\end{align*}
Evaluating at $s^\star(v)$ and set it to zero, we can solve for $s^\star(v)$ in terms of $\Lambda_0=\Lambda_0(v)$ and $\Lambda_1=\Lambda_1(v)$ as (cf. \cite{nielsen_chernoff_2013,nielsen_information-geometric_2013})
\begin{align*}
    s^\star(v)
    =
    \frac{
    \log \left(
    \frac{ ( \Lambda_0 / \Lambda_1 ) - 1 } { \log ( \Lambda_0 / \Lambda_1 ) }
    \right)
    }{
    \log ( \Lambda_0 / \Lambda_1 )
    }
    \triangleq
    S( R )
\end{align*}
where $R\eqdef R(v) \eqdef \Lambda_0 / \Lambda_1$ is the ratio between the Poisson rates, and $S:(0,1)\to [0,1]$ is the function
\begin{align*}
    S(R)
    \triangleq
    \frac{\log(\frac{R-1}{\log R})}{\log R}
    .
\end{align*}
Since $0 < \Lambda_0 < \Lambda_1$ for $v \in (0,1]$, indeed $R\in(0,1).$ It can shown through calculus that $S(R)$ is strictly increasing on $R\in(0,1)$, and $\lim_{R\to 0^+} S(R)=0$ while $\lim_{R\to 1^-} S(R)=1/2$. Hence, $S(R)\in(0,1/2)$ for $R\in(0,1)$, and we conclude that $s^\star(v)\in(0,1/2)$ for $v\neq 0$. Therefore, for any $Q_{\tilde{v}}(v)\eqdef\bbI[v=\tilde{v}]$ with $\tilde{v}\neq 0$, we have $s^\star(Q)=s^\star(\tilde{v})\in (0,1/2)\subset(0,1/2]$.

Next, we show that for any $Q$ that is not a point mass, the corresponding $s^\star(Q)$ is well-defined, and that $s^\star(Q)\in (0,1/2]$. Since $Q\neq Q_0$, the strict concavity of $\mathbb{C}_s(P_0^v\|P_1^v;\mu_{\bbN_0})$ on $s\in[0,1]$ when $v\neq 0$ carries over to $\bbE_{V\sim Q} \left[ \mathbb{C}_s(P_0^V\|P_1^V;\mu_{\bbN_0}) \right]$, and hence $\argmax_{s\in[0,1]} \left[ \mathbb{C}_s(P_0^V\|P_1^V;\mu_{\bbN_0}) \right]$ is a singleton, i.e., $s^\star(Q)$ well-defined.

Now we show that $s^\star(Q)\in(0,1/2]$. Suppose, towards a contradiction, that $s^\star(Q)\in(1/2,1).$ Fix any $v\neq 0$. Since $\mathbb{C}_s(P_0^v\|P_1^v;\mu_{\bbN_0})\geq 0$ is strictly concave on $s\in[0,1]$, and that its maximizer $s^\star(v)\in (0,1/2)$, it can be concluded that $s\mapsto\mathbb{C}_s(P_0^v\|P_1^v;\mu_{\bbN_0})$ is strictly decreasing on $[1/2,1]$, and hence $\mathbb{C}_{s^\star(Q)}(P_0^v\|P_1^v;\mu_{\bbN_0}) < \mathbb{C}_{1/2}(P_0^v\|P_1^v;\mu_{\bbN_0})$. Since this holds for every $v\neq 0$, and that $Q$ is not a point mass, we have
\begin{align*}
    \max_{s\in[0,1]}
    \bbE_{V\sim Q}
    \left[
    \mathbb{C}_{s}(P_0^V\|P_1^V;\mu_{\bbN_0})
    \right]
    &=
    \bbE_{V\sim Q}
    \left[
    \mathbb{C}_{s^\star(Q)}(P_0^V\|P_1^V;\mu_{\bbN_0})
    \right]
    \\&
    <
    \bbE_{V\sim Q}
    \left[
    \mathbb{C}_{1/2}(P_0^V\|P_1^V;\mu_{\bbN_0})
    \right]
    ,
\end{align*}
which is a contradiction. The proof for \textbf{Claim 2} is thus complete.

\textbf{Claim 1} and \textbf{Claim 2} together implies that, when $r=r_\textnormal{SN} \ll 1$ and hence $R_\textnormal{SN} \gg 1$ (high SNR regime), the joint optimization problem for the constrained open-loop exponent
\begin{align*}
    \beta_\textnormal{OL}
    &=
    \sup_{s\in[0,1], Q\in\calQ(R_\textnormal{CA},R_\textnormal{CE})}
        \bbE_{V\sim Q}
        \left[
        \bbC_{s} \left( P_{0}^{V} \| P_1^{V} ; \mu_{\bbN_0}
        \right)
        \right]
    ,
\end{align*}
which equals both iterated maximization problems in \eqref{eq:opt_prob_exponent} since the objective function is bounded (by uniform boundedness of $\bbC_{s} \left( P_{0}^{v} \| P_1^{v} ; \mu_{\bbN_0}
\right)$ shown in \eqref{eq:Cher_div_unif_bdd}), have joint maximizer(s) $(s^\star,Q^\star)$ satisfying $s^\star\in[0,1/2)$ and
\begin{align*}
Q^\star(v) = R_\textnormal{CE} \mathbb{I}[v=1] + \left(1-R_\textnormal{CE} \right) \mathbb{I}[v=0]
.
\end{align*}

\textbf{Claim 3:} 
There exists a pair $(r_\textnormal{SN},R_\textnormal{CE})$ such that the corresponding optimal policy is not a ``time-sharing" policy between $0$ and $\alpha$. 

\textbf{Proof 3:}
There exists a numerical example where $r_\textnormal{SN}=0.01$ and $R_\textnormal{CE}=0.9$ (still $R_\textnormal{CA}=1$) where the ``time-sharing" policy between $0$ and $\alpha$ is \emph{not} exponent-optimal. 

It can be readily computed that with $s^\star=0.31$ and $Q^\star(v)=\bbI[v=\sqrt{0.9}]$,
\begin{align*}
    \bbE_{V\sim Q^\star}
        \left[
        \bbC_{s^\star} \left( P_{0}^{V} \| P_1^{V} ; \mu_{\bbN_0}
        \right)
        \right]
    =
    \bbC_{s^\star} \left( P_{0}^{\sqrt{0.9}} \| P_1^{\sqrt{0.9}} ; \mu_{\bbN_0} \right)
    \approx
    1.9822
    ,
\end{align*}
while the time-sharing policy between $0$ and $\alpha$,
\begin{align*}
    Q_\textnormal{ts}(v)
    =
    0.9 \bbI[ v = 1 ] + 0.1 \bbI[ v=0 ]
\end{align*}
gives an exponent at most
\begin{align*}
    \max_{s\in[0,1]} 
    \bbE_{V\sim Q_\textnormal{ts}}
        \left[
        \bbC_{s} \left( P_{0}^{V} \| P_1^{V} ; \mu_{\bbN_0}
        \right)
        \right]
    &=
    0.9
    \max_{s\in[0,1]} 
        \bbC_{s} \left( P_{0}^{1} \| P_1^{1} ; \mu_{\bbN_0}
        \right)
    \\&=
    0.9 \times 2.1359
    \approx
    1.9314
    .
\end{align*}

\newpage
\bibliographystyle{IEEEtran}
\bibliography{references.bib}

\begin{thebibliography}{10}
\providecommand{\url}[1]{#1}
\csname url@samestyle\endcsname
\providecommand{\newblock}{\relax}
\providecommand{\bibinfo}[2]{#2}
\providecommand{\BIBentrySTDinterwordspacing}{\spaceskip=0pt\relax}
\providecommand{\BIBentryALTinterwordstretchfactor}{4}
\providecommand{\BIBentryALTinterwordspacing}{\spaceskip=\fontdimen2\font plus
\BIBentryALTinterwordstretchfactor\fontdimen3\font minus \fontdimen4\font\relax}
\providecommand{\BIBforeignlanguage}[2]{{%
\expandafter\ifx\csname l@#1\endcsname\relax
\typeout{** WARNING: IEEEtran.bst: No hyphenation pattern has been}%
\typeout{** loaded for the language `#1'. Using the pattern for}%
\typeout{** the default language instead.}%
\else
\language=\csname l@#1\endcsname
\fi
#2}}
\providecommand{\BIBdecl}{\relax}
\BIBdecl

\bibitem{helstrom_detection_1967}
C.~W. Helstrom, ``Detection theory and quantum mechanics,'' \emph{Information and Control}, vol.~10, no.~3, pp. 254--291, Mar. 1967.

\bibitem{helstrom_detection_1968}
------, ``Detection theory and quantum mechanics ({II}),'' \emph{Information and Control}, vol.~13, no.~2, pp. 156--171, Aug. 1968.

\bibitem{helstrom_quantum_1976}
------, \emph{Quantum {Detection} and {Estimation} {Theory}}.\hskip 1em plus 0.5em minus 0.4em\relax New York: Academic Press, 1976.

\bibitem{yuen_communication_1971}
H.~P. Yuen, ``Communication theory of quantum systems,'' Ph.D. dissertation, Massachusetts Institute of Technology, 1971.

\bibitem{holevo_statistical_1972}
A.~S. Holevo, ``\BIBforeignlanguage{en}{Statistical problems in quantum physics},'' in \emph{\BIBforeignlanguage{en}{Proceedings of the {Second} {Japan}-{USSR} {Symposium} on {Probability} {Theory}}}, Kyoto, Japan, Aug. 1972, pp. 104--119.

\bibitem{yuen_optimum_1975}
H.~P. Yuen, R.~Kennedy, and M.~Lax, ``Optimum testing of multiple hypotheses in quantum detection theory,'' \emph{IEEE Transactions on Information Theory}, vol.~21, no.~2, pp. 125--134, Mar. 1975.

\bibitem{shapiro_optical_1979}
J.~Shapiro, H.~P. Yuen, and A.~Mata, ``Optical communication with two-photon coherent states–{Part} {II}: {Photoemissive} detection and structured receiver performance,'' \emph{IEEE Transactions on Information Theory}, vol.~25, no.~2, pp. 179--192, Mar. 1979.

\bibitem{shapiro_ultimate_2005}
J.~Shapiro, S.~Guha, and B.~Erkmen, ``\BIBforeignlanguage{EN}{Ultimate channel capacity of free-space optical communications},'' \emph{\BIBforeignlanguage{EN}{Journal of Optical Networking}}, vol.~4, no.~8, pp. 501--516, Aug. 2005.

\bibitem{martinez_spectral_2007}
A.~Martinez, ``\BIBforeignlanguage{EN}{Spectral efficiency of optical direct detection},'' \emph{\BIBforeignlanguage{EN}{Journal of the Optical Society of America B}}, vol.~24, no.~4, pp. 739--749, Apr. 2007.

\bibitem{kennedy_near-optimum_1973}
R.~S. Kennedy, ``A near-optimum receiver for the binary coherent state quantum channel,'' \emph{Research Laboratory of Electronics, MIT, Quarterly Progress Report}, vol. 108, pp. 219--225, 1973.

\bibitem{shapiro_near-optimum_1980}
J.~Shapiro, ``On the near-optimum binary coherent-state receiver ({Corresp}.),'' \emph{IEEE Transactions on Information Theory}, vol.~26, no.~4, pp. 490--491, Jul. 1980.

\bibitem{takeoka_discrimination_2008}
M.~Takeoka and M.~Sasaki, ``Discrimination of the binary coherent signal: {Gaussian}-operation limit and simple non-{Gaussian} near-optimal receivers,'' \emph{Phys. Rev. A}, vol.~78, no.~2, p. 022320, Aug. 2008.

\bibitem{wittmann_near-optimal_2008}
C.~Wittmann, M.~Takeoka, K.~N. Cassemiro, M.~Sasaki, G.~Leuchs, and U.~L. Andersen, ``Near-optimal quantum state discrimination of optical coherent states,'' in \emph{2008 {Conference} on {Lasers} and {Electro}-{Optics} and 2008 {Conference} on {Quantum} {Electronics} and {Laser} {Science}}, May 2008, pp. 1--2.

\bibitem{takeoka_near-optimal_2009}
M.~Takeoka, K.~Tsujino, M.~Sasaki, and A.~Lvovsky, ``Near-optimal quantum receivers for the coherent state discrimination,'' in \emph{{AIP} {Conference} {Proceedings}}.\hskip 1em plus 0.5em minus 0.4em\relax AIP, 2009, pp. 205--208.

\bibitem{dolinar_optimum_1973}
S.~J. Dolinar, ``An optimum receiver for the binary coherent state quantum channel,'' \emph{Research Laboratory of Electronics, MIT, Quarterly Progress Report}, vol.~11, pp. 115--120, 1973.

\bibitem{dolinar_class_1976}
------, ``A class of optical receivers using optical feedback,'' Thesis, Massachusetts Institute of Technology, 1976.

\bibitem{erkmen_dolinar_2011}
B.~I. Erkmen, K.~M. Birnbaum, B.~E. Moision, and S.~J. Dolinar, ``The {Dolinar} receiver in an information theoretic framework,'' in \emph{Quantum {Communications} and {Quantum} {Imaging} {IX}}, vol. 8163.\hskip 1em plus 0.5em minus 0.4em\relax SPIE, Sep. 2011, pp. 164--178.

\bibitem{chung_capacity_2017}
H.~W. Chung, S.~Guha, and L.~Zheng, ``Capacity of optical communications over a lossy bosonic channel with a receiver employing the most general coherent electro-optic feedback control,'' \emph{Physical Review A}, vol.~96, no.~1, p. 012320, Jul. 2017.

\bibitem{bondurant_near-quantum_1993}
R.~S. Bondurant, ``Near-quantum optimum receivers for the phase-quadrature coherent-state channel,'' \emph{Optics Letters}, vol.~18, no.~22, pp. 1896--1898, Nov. 1993.

\bibitem{becerra_state_2011}
F.~E. Becerra, J.~Fan, G.~Baumgartner, S.~Polyakov, J.~Goldhar, J.~T. Kosloski, and A.~L. Migdall, ``State discrimination signal nulling receivers,'' in \emph{Quantum {Communications} and {Quantum} {Imaging} {IX}}, vol. 8163, Sep. 2011, pp. 179--186.

\bibitem{becerra_m-ary-state_2011}
F.~E. Becerra, J.~Fan, G.~Baumgartner, S.~V. Polyakov, J.~Goldhar, J.~T. Kosloski, and A.~Migdall, ``M-ary-state phase-shift-keying discrimination below the homodyne limit,'' \emph{Physical Review A}, vol.~84, no.~6, p. 062324, Dec. 2011.

\bibitem{becerra_photon_2015}
F.~E. Becerra, J.~Fan, and A.~Migdall, ``Photon number resolution enables quantum receiver for realistic coherent optical communications,'' \emph{Nature Photon}, vol.~9, no.~1, pp. 48--53, Jan. 2015.

\bibitem{naghshvar_active_2010}
M.~Naghshvar and T.~Javidi, ``Active {M}-ary sequential hypothesis testing,'' in \emph{2010 {IEEE} {International} {Symposium} on {Information} {Theory}}, Jun. 2010, pp. 1623--1627.

\bibitem{naghshvar_sequentiality_2013}
------, ``Sequentiality and adaptivity gains in active hypothesis testing,'' \emph{IEEE Journal of Selected Topics in Signal Processing}, vol.~7, no.~5, pp. 768--782, Oct. 2013.

\bibitem{naghshvar_active_2013}
------, ``Active sequential hypothesis testing,'' \emph{The Annals of Statistics}, vol.~41, no.~6, pp. 2703--2738, Dec. 2013.

\bibitem{nitinawarat_controlled_2013}
S.~Nitinawarat, G.~K. Atia, and V.~V. Veeravalli, ``Controlled sensing for multihypothesis testing,'' \emph{IEEE Transactions on Automatic Control}, vol.~58, no.~10, pp. 2451--2464, Oct. 2013.

\bibitem{nitinawarat_controlled_2013-1}
S.~Nitinawarat and V.~V. Veeravalli, ``Controlled sensing for multihypothesis testing based on {Markovian} observations,'' in \emph{2013 {IEEE} {International} {Symposium} on {Information} {Theory}}, Jul. 2013, pp. 2199--2203.

\bibitem{nitinawarat_controlled_2013-2}
------, ``Controlled sensing for sequential multihypothesis testing with non-uniform sensing cost,'' in \emph{2013 {Asilomar} {Conference} on {Signals}, {Systems} and {Computers}}, Nov. 2013, pp. 1095--1099.

\bibitem{nitinawarat_controlled_2015}
------, ``Controlled sensing for sequential multihypothesis testing with controlled markovian observations and non-uniform control cost,'' \emph{Sequential Analysis}, vol.~34, no.~1, pp. 1--24, Jan. 2015.

\bibitem{dimario_demonstration_2022}
M.~T. DiMario and F.~E. Becerra, ``Demonstration of optimal non-projective measurement of binary coherent states with photon counting,'' \emph{npj Quantum Information}, vol.~8, no.~1, pp. 1--8, Jul. 2022.

\bibitem{rodriguez-garcia_determination_2022}
M.~A. Rodríguez-García, M.~T. DiMario, P.~Barberis-Blostein, and F.~E. Becerra, ``Determination of the asymptotic limits of adaptive photon counting measurements for coherent-state optical phase estimation,'' \emph{npj Quantum Information}, vol.~8, no.~1, pp. 1--11, Aug. 2022.

\bibitem{rodriguez-garcia_adaptive_2024}
M.~A. Rodríguez-García and F.~E. Becerra, ``Adaptive phase estimation with squeezed vacuum approaching the quantum limit,'' \emph{Quantum}, vol.~8, p. 1480, Sep. 2024.

\bibitem{li_optimal_2021}
Y.~Li, V.~Y.~F. Tan, and M.~Tomamichel, ``Optimal adaptive strategies for sequential quantum hypothesis testing,'' in \emph{2021 {IEEE} {Information} {Theory} {Workshop} ({ITW})}, Oct. 2021, pp. 1--6.

\bibitem{martinez_vargas_quantum_2021}
E.~Martínez~Vargas, C.~Hirche, G.~Sentís, M.~Skotiniotis, M.~Carrizo, R.~Muñoz-Tapia, and J.~Calsamiglia, ``Quantum sequential hypothesis testing,'' \emph{Phys. Rev. Lett.}, vol. 126, no.~18, p. 180502, May 2021.

\bibitem{fields_sequential_2024}
G.~Fields, N.~Sangwan, J.~Postlewaite, S.~Guha, and T.~Javidi, ``Sequential hypothesis testing of quantum states,'' in \emph{2024 {IEEE} {Information} {Theory} {Workshop} ({ITW})}, Nov. 2024, pp. 372--377.

\bibitem{yuan_kennedy_2020}
R.~Yuan, M.~Zhao, S.~Han, and J.~Cheng, ``{K}ennedy receiver using threshold detection and optimized displacement under thermal noise,'' \emph{IEEE Communications Letters}, vol.~24, no.~6, pp. 1313--1317, Jun. 2020.

\bibitem{yuan_optimally_2021}
------, ``Optimally displaced threshold detection for discriminating binary coherent states using imperfect devices,'' \emph{IEEE Transactions on Communications}, vol.~69, no.~4, pp. 2546--2556, Apr. 2021.

\bibitem{Cui2022Quantum}
C.~Cui, W.~Horrocks, S.~Hao, S.~Guha, N.~Peyghambarian, Q.~Zhuang, and Z.~Zhang, ``Quantum receiver enhanced by adaptive learning,'' \emph{Light: Science \& Applications}, vol.~11, no.~1, Dec 2022.

\bibitem{becerra_implementation_2013}
F.~E. Becerra, J.~Fan, and A.~Migdall, ``Implementation of generalized quantum measurements for unambiguous discrimination of multiple non-orthogonal coherent states,'' \emph{Nat Commun}, vol.~4, no.~1, p. 2028, Jun. 2013.

\bibitem{zhuang_ultimate_2020}
Q.~Zhuang, ``Ultimate limits of approximate unambiguous discrimination,'' \emph{Phys. Rev. Research}, vol.~2, no.~4, p. 043276, Nov. 2020.

\bibitem{sidhu_linear_2023}
J.~S. Sidhu, M.~S. Bullock, S.~Guha, and C.~Lupo, ``Linear optics and photodetection achieve near-optimal unambiguous coherent state discrimination,'' \emph{Quantum}, vol.~7, p. 1025, May 2023.

\bibitem{assalini_revisiting_2011}
A.~Assalini, N.~Dalla~Pozza, and G.~Pierobon, ``Revisiting the {Dolinar} receiver through multiple-copy state discrimination theory,'' \emph{Physical Review A}, vol.~84, no.~2, p. 022342, Aug. 2011.

\bibitem{da_silva_achieving_2013}
M.~P. da~Silva, S.~Guha, and Z.~Dutton, ``Achieving minimum-error discrimination of an arbitrary set of laser-light pulses,'' \emph{Physical Review A}, vol.~87, no.~5, p. 052320, May 2013.

\bibitem{nair_realizable_2014}
R.~Nair, S.~Guha, and S.-H. Tan, ``Realizable receivers for discriminating coherent and multicopy quantum states near the quantum limit,'' \emph{Physical Review A}, vol.~89, no.~3, p. 032318, Mar. 2014.

\bibitem{zoratti_agnostic_2021}
F.~Zoratti, N.~Dalla~Pozza, M.~Fanizza, and V.~Giovannetti, ``Agnostic {Dolinar} receiver for coherent-state classification,'' \emph{Physical Review A}, vol. 104, no.~4, p. 042606, Oct. 2021.

\bibitem{geremia_distinguishing_2004}
J.~Geremia, ``Distinguishing between optical coherent states with imperfect detection,'' \emph{Phys. Rev. A}, vol.~70, no.~6, p. 062303, Dec. 2004.

\bibitem{nielsen_chernoff_2013}
F.~Nielsen, ``Chernoff information of exponential families,'' \emph{IEEE Signal Process. Lett.}, vol.~20, no.~3, pp. 269--272, Mar. 2013.

\bibitem{nielsen_information-geometric_2013}
------, ``An information-geometric characterization of {Chernoff} information,'' \emph{IEEE Signal Processing Letters}, vol.~20, no.~3, pp. 269--272, Mar. 2013.

\end{thebibliography}

\end{document}